\newcommand{\tr}{\text{tr}\,}
\title{Group ring cryptography: Cryptographic schemes, key exchange, public key}
\author{Ted
 Hurley\footnote{National Universiy of Ireland Galway, email:
 Ted.Hurley@NuiGalway.ie }}
\date{}
\begin{document}

\maketitle

\begin{abstract} General cryptographic schemes  are presented 
where keys can be one-time or ephemeral. 
Processes for key exchange are derived. Public key cryptographic schemes
based on the new systems are easily established. Authentication and signature
schemes are  implemented. The schemes are an advance
on group ring techniques and are easily implemented but highly secure.   
They may be integrated with error-correcting  coding schemes   
so that encryption/coding and decryption/decoding may
be done simultaneously.  
\footnote{AMS Classification Classification: 15A99, 94A60\\ Key words and phrases: Cryptography, key exchange, one-time, group ring }\end{abstract}
\section{Introduction}
%
This paper introduces cryptographic systems based on operations
with randomly chosen vectors, matrices and group ring elements. These are  an advance on group ring cryptographic techniques \cite{hur5} and are easily implemented and secure. 
Keys used may be one-time 
session  keys or ephemeral;   as they are easily constructed they may be 
changed if required  for each transaction or series of transactions.
Key exchange methods are
derived.  Public key cryptographic schemes based on the new systems
are derived. It is straightforward to include authentication, signature
and `person-in-middle' interference prevention 
methods based on
the schemes. Encryption may be 
incorporated with error-correcting codes, \cite{hur1,hur4,hur8}, so that encryption/coding and
decryption/decoding may be done simultaneously.
A public-key scheme constructed may be altered and made private to an individual. 

 




Details have also  appeared in \cite{hur6}. 

Generally the running time for encryption is at most $O(n\log n)$ and for decryption (when the the key is known) the running time  is also at most of this order.  Large pools from which to randomly draw the keys are available; 
 for example, see sections \ref{comm}, \ref{singular}, systems of size $101$  over $Z_p$
has of  order $p^{100}$ different choices for the construction of a key.



\subsection*{Features}

\begin{itemize}
\item  Encryption and decryption keys are easy to construct 
and   can be chosen for a one-time  session 
  or series of transactions. 
\item Key exchange schemes are derived. 
\item Public key cryptographic schemes are developed. These can be
  altered for private communication and messages then authenticated.
\item Authentication, signature
and `person-in-middle' interference prevention 
methods are provided. 
\item Encryption and error-correction coding may be integrated into one system.
Coding and encryption can complement one another. 

\end{itemize}

When a system is used for one-time session then three transmissions are
necessary. A key exchange also requires three transmissions but
once a key has been exchanged each transaction naturally then requires
just the one transmission. 

\subsection*{Layout} The layout of the paper is as follows:
\begin{enumerate}

\item Details on various theory required for the constructions
are given in section \ref{theory} and may be consulted as
required. Here also various systems and schemes 
within which  
 the constructions may be realised  
 are outlined. 
 
\item General encryption methods are
  introduced in section \ref{message}.
\item Key exchange methods are laid out in section \ref{key}.
\item Public key encryption methods are given in section \ref{public}.
\item Methods to include error-correcting with the cryptography is
  presented in section \ref{coding}.
\item Multiple design methods are presented in section
  \ref{mult}.
\item Section \ref{who} derives authentication, signature and
  `person-in-middle' solutions.
\end{enumerate}

Basic references for cryptography 
include  \cite{koblitz}, \cite{koblitz1}, \cite{hand}. 
The first two in particular contain much of the algebra required 
  and further basic algebraic 
material may be obtained in \cite{sehgal}.

\section{Encrypt message}\label{message}
\subsection{General Schemes}\label{encrypt2} 
Here  $R_{n\ti n}$ denotes the ring of $n\ti n$ matrices,
and $R^n$ is the ring of vectors of length $n$, over
a ring $R$.  $RG$ denotes the group ring of the group $G$ over
the ring $R$;
 for details on general properties of group rings see
\cite{sehgal}. The $R$ is usually a field and is often then denoted by
$F$.   
For details on {\em group ring matrices} see 
 for example \cite{hur}; the set-up and main
 properties of these are given in section
\ref{results}. They may also be referred to as   
{\em $RG$-matrices} when the group ring in question is 
specified  
and are obtained from the embeddings of  group
rings into rings of matrices \cite{hur}; they include such matrices as circulant
matrices, circulant of circulant matrices and similar. 

An $RG$-matrix,  which is of size $|G|\ti |G|$, is determined by its first row and 
is a matrix corresponding to a group
ring element, relative to a listing of the elements of $G$.
Two $RG$-matrices obtained from the same group ring 
 $RG$ (relative to the same listing) are said to be of
the same {\em type}. The $RG$-matrices commute if and only if $G$ is
commutative. 
 Methods to randomly choose singular and  non-singular
matrices with certain properties from a huge  pool of such matrices are given 
in section \ref{singular}.  

 Let $\un{x}$ be a row vector with entries from $R$. 
Then the {\em completion} of
$\un{x}$ in $RG$ (relevant to a particular listing) is the
$RG$-matrix with first row $\un{x}$. The {\em rank of a vector},
relative to 
its completion in a specified group ring, is defined
as the rank of its completion; this 
gives meaning to {\em kernel of a vector} relative to its group ring
completion.  

The completion of the vector $\un{x}$  is denoted by the corresponding
capital letter $X$ (without underlining). For $a\in RG$ its image in
$R_{n\ti n}$ under the embedding of $RG$ into $R_{n\ti n}$ is denoted by
the corresponding capital letter $A$.  When $\un{x}$ is a vector to be
considered as an element of $RG$ then also use $X$ (without
underlining) to denote its image under this embedding. 

The following Lemma is immediate.
\begin{lemma}\label{rg} Suppose $P,Q$ are $RG$-matrices with the same first
  row. Then $P=Q$. 
\end{lemma}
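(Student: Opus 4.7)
The plan is to unpack the definition of an $RG$-matrix recalled in the excerpt and observe that the lemma reduces to the uniqueness of coefficients in the group ring. Fix the listing $g_1,\ldots,g_n$ of $G$ relative to which both $P$ and $Q$ are formed (implicit in the statement, since both are $RG$-matrices of the same group ring and the first-row data being compared is meaningful only with respect to a common listing). By definition, any such $RG$-matrix is the image in $R_{n\times n}$ of some element $a=\sum_{i=1}^{n} a_i g_i\in RG$ under the embedding $RG\hookrightarrow R_{n\times n}$, and the first row of that matrix is precisely the coefficient vector $(a_1,\ldots,a_n)$.

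Accordingly, I would write $P$ as the image of $p\in RG$ and $Q$ as the image of $q\in RG$. The hypothesis that $P$ and $Q$ share a first row then says that $p$ and $q$ have identical coefficient vectors relative to the basis $\{g_1,\ldots,g_n\}$ of $RG$ as a free $R$-module. Uniqueness of coefficients forces $p=q$ in $RG$, and applying the embedding to this equality yields $P=Q$.

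The only potential obstacle is the convention that the first row of the $RG$-matrix really is the coefficient vector of the underlying group ring element; this is part of the definition adopted in the paper (and is why the completion of a row vector $\underline{x}$ in $RG$ is well defined). Once that convention is in place, no computation is required and the lemma is immediate, as the author states.
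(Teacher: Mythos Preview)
Your argument is correct and matches the paper's own treatment: the lemma is stated as ``immediate'' with no proof given, and your unpacking of the definition---first row determines the coefficients of the underlying group ring element, hence the element, hence the matrix---is exactly the intended justification. One small refinement: from the displayed definition in Section~\ref{results} the first row of $M(RG,w)$ is $(\alpha_{g_1^{-1}g_1},\ldots,\alpha_{g_1^{-1}g_n})$, which equals the coefficient vector $(\alpha_{g_1},\ldots,\alpha_{g_n})$ only when $g_1$ is the identity; in general it is a fixed permutation of that vector, but this still uniquely determines all the $\alpha_g$ and your conclusion stands.
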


Thus if $\un{x}$ is a vector in $R^n$ and $A$ is an $RG$-matrix of size
$n\ti n$ then from Lemma \ref{rg} the
completion of $\un{x}A$ in $RG$ is $XA$ where $X$ is the completion of
$\un{x}$ in $RG$.  

Let $\un{x}$ be the data to be transmitted secretly from A(lice) to B(ob). 
The data $\un{x}$ is arranged so 
that $X$ is singular with large kernel 
where $X$ is the completion of $\un{x}$ in same type of $RG$-matrix as
$A$ (the matrix chosen by A below in 1.); details on how this can be
arranged are given below in
section \ref{singular}.  When $X$ is singular with large kernel 
then also $CX,XC$ are singular with large kernel for any matrix $C$. 
\subsubsection{General set-up}\label{tryy}
\begin{enumerate}
\item A chooses $A$, a non-singular group ring matrix, and transmits $\un{x}A$. 
\item B chooses $B$ non-singular and transmits $BXA$. 
\item A transmits $BX$. 
\item B works out $B^{-1}BX = X$. 
\end{enumerate}
$B$ need not in general be a group ring matrix and even if so it need
not be of the same type as $A$. 
If $B$ is of the same type as
  $A$ and $X$ then only the 
  first row of the matrices in 2.\ and 3.\ need be transmitted.
In 4.\ only the first row of $B^{-1}BX$
  need be calculated as the first row of $X$ give $\un{x}$. The
  inverses of $A,B$ should be easily obtainable; 
pools of
  matrices from which such matrices may be drawn are given in section
  \ref{theory}. When using matrices with certain structures such as
  $RG$-matrices the matrix multiplications and vector-matrix
  multiplications may be performed by convolutional methods. 

The matrices $A,B$ here are  as large as  the vector
of data $\un{x}$ and chosen randomly. The data may also be broken up
and multiple vector design schemes implemented as shown in section
\ref{mult}. 

Simplified schemes with commuting matrices  are
derived in the next section \ref{encrypt1}; these do not necessarily
need $RG$-matrices.

\subsubsection{Commuting schemes}\label{encrypt1}
Suppose the large pool of matrices available commute with one
another. 
In these cases simplified schemes may be designed as follows. 

Let $\un{x}$ be data to be transmitted secretly from A(lice) to
B(ob). 
\begin{enumerate} \item A chooses  the matrix
$A$ non-singular and $\un{x}A$ is transmitted. 
\item B chooses the matrix $B$ non-singular 
and transmits $\un{x}AB$. 
\item  A transmits $\un{x}ABA^{-1}=\un{x} B$.
\item B applies $B^{-1}$ to $\un{x}B$ to get $\un{x}$.
\end{enumerate}

Even when the matrices commute, the general scheme of \ref{tryy} may
 still be used.

Schemes with commuting matrices may be achieved using group ring 
matrices derived from  an abelian group ring as for example circulant
matrices or circulant of circulant matrices.
 Section \ref{comm} discusses types of
such  matrices which may be used.   
 When such group ring matrices are used the data
is  arranged so that the group ring 
matrix formed using $\un{x}$ as first row
is singular with large kernel; how to arrange the data in such a way
is discussed later. 
The non-singular matrices are chosen so that the inverses
are immediate or straightforward to calculate.

When some matrices commute the data $\un{x}$ may also be `protected' at each
end as follows:
\begin{enumerate}
\item A chooses $A_1,A_2$ non-singular and transmits
  $A_1XA_2$.  
\item B chooses $B_1,B_2$ non-singular and transmits
  $B_1A_1XA_2B_2$. It is necessary that $A_iB_i=B_iA_i$ for $i=1,2$
  but otherwise there are no commuting conditions. 
\item A transmits $B_1XB_2$. 
\item B works out $ X$. 
\end{enumerate}
If all matrices commute, including $X$, then the system is the same as
above, with $A$ replaced by $A_1A_2$ and $B$
replaced by $B_1B_2$. 

\section{Key exchange}\label{key}
A modification of the general scheme is now  set up  so that  
a process may be initiated whereby two intended correspondents can exchange
a secret encoder/decoder.

Let $\{\un{x}, \un{y}\}$ be vectors so that $\{X,Y\}$ are singular with large
kernel and  some combination of
$\{X,Y\}$ or some combination of $\{X,Y\}$ with a known element or elements 
 is non-singular. Methods to randomly  choose such vectors
 $\{\un{x},\un{y}\}$ are developed in section \ref{theory} below. 
\begin{enumerate}
\item A chooses $A$ non-singular and $\un{x}$ with large kernel 
and transmits $\un{x}A$. 
\item  B chooses $B$ non-singular and transmits $BXA$.
\item A transmits $BX$. B now knows $X$. A can now repeat the process to get $Y$
  secretly to B. Or else:
\begin{enumerate}
\item B chooses $Y$ with large kernel and $B_1$ non-singular 
so that a combination of $\{X,Y\}$ 
  with a known element or known elements is
  non-singular and transmits $B_1Y$.
\item A transmits $B_1YA$.
\item B transmits $YA$.
\end{enumerate}   
\item Both A and B now have $X,Y$ from which to form the encoding
  matrix for use between A and B.
\end{enumerate}

 When $\un{x},\un{y}$ are known, an $RG$-matrix may be formed
 from these using a { different} $RG$ from that used for the key
 exchange. Convolutional methods where appropriate as with group ring
 matrices may be used for matrix and matrix-vector
 multiplications.  
 It is sometimes the case that it is sufficient to simply add on a known element
or known elements  to $\un{x}+\un{y}$ to obtain an element whose
completion is non-singular. For example  
$X+Y+1$ may be known to be non-singular in some systems, see section
\ref{theory}. 
Knowing the added element(s) gives no information as
$\un{x},\un{y}$ are known only to A,B.  In  section \ref{idemspot} it
is shown how to randomly choose $\{X,Y\}$ each with large kernel so
that a (linear) 
combination of $\{X,Y\}$ is non-singular and its inverse is easily
constructed. 

In cases $X,Y$ may be chosen so that small powers of $X,Y$ are
zero. This ensures that
$\ker X, \ker Y$ are large, see Corollary \ref{rank1} below,  and then
$\ker XC, \ker CY, \ker CX, \ker YC$ are also  
large for any matrix $C$. 

When key has been
exchanged  between A and B, messages between them  may then be encrypted
directly. When a key has been exchanged
it is not necessary to arrange data to
be transferred to have large kernel.  
Messages may also then be encrypted and encoded  together as shown
later. 

 The data $\un{x},\un{y}$ may be protected on both sides when some
 matrices commute; in 1.\ above, A chooses $A_1,A_2$ non-singular and
 $\un{x}$ with large kernel and transmits $A_1XA_2$ to which B chooses
 $B_1,B_2$ where $A_iB_i=B_iA_i$ 
and transmits $B_1A_1XA_2B_2$ and process continues as above.
 
\subsubsection*{Key exchange with commuting matrices}
When the matrices commute the schemes may be simplified as follows. 

\begin{enumerate}
\item A chooses $\un{x}$, with large kernel, and $A$, non-singular, 
 and transmits $\un{x}A$.
\item B chooses $B_1$ non-singular and transmits $\un{x}AB_1$.
\item A transmits
  $\un{x}B_1$. At this stage B, and A, know $\un{x}$. A could proceed
  to transmit $\un{y}$ secretly or else:
\begin{enumerate}
\item B chooses $\un{y}$ with large kernel and $B_2$ non-singular 
and transmits $\un{y}B_2$.
\item A chooses $A_1$  non-singular  and transmits $\un{y}B_2A_1$.
\item B transmits $\un{y}A_1B$.
\end{enumerate}
\item At this stage both A and B know $\un{x}, \un{y}$ from which
  the combination is formed whose completion  is non-singular; this is 
  used as key for transmission(s) between A and B.
\end{enumerate}

\subsubsection*{Variations} Still using the main ideas, it is clear that
many variations on the above schemes can be developed.  
For example the $\un{x}$ as $X$ can be `protected' on both
sides by transmitting $A_1XA_2$ at 1.\ above and $B_1A_1XA_2B_2$ at
2.\  where $A_iB_i=B_iA_i$ for $i=1,2$; similarly the
$\un{y}$ can be `protected' on both sides.  Methods using series of vectors
$\{\un{x_1},\un{x_2}, \ldots, \un{x_r}\}$ and $\{\un{y_1},\un{y_2},
\ldots, \un{y_r}\} $ 
are presented in section \ref{keyt}. 

\section{Public key}\label{public}
Public key cryptographic methods may be  designed by choosing vectors with
large kernels from a large pool such that  
a linear combination of these is non-singular. 
The participant A constructs a public key as follows. 
\begin{enumerate}
\item A chooses vectors $\{\un{x},\un{y}\}$ such that their
  completions $\{X,Y\} $
 have large kernels and such that a linear  combination of
 $\{X,Y\}$  is non-singular. 
\item  A chooses non-singular matrices $\{A_1,A_2\}$ and works out
  $\{XA_1,YA_2\}$.
\item A has public key $(XA_1, YA_2)$ and private key $(X,Y,A_1,A_2)$.
\end{enumerate}

Suppose now B wishes to communicate $\un{z}$ to A.
\begin{enumerate}
\item B transmits $(\un{z}XA_1, \un{z}YA_2)$.
\item A works out $(\un{z}X,\un{z}Y)$ and uses the combination $f(X,Y)$ of $X,Y$
  to work out $\un{z}f(X,Y)$ where $f(X,Y)$ is non-singular; from this 
  $\un{z}$ may be worked out by A. 
\end{enumerate}

Methods to  randomly choose such $\{\un{x},\un{y}\}$ 
are shown in section \ref{idemspots}
and methods to randomly choose such $\{A_1,A_2\}$ appear in various
parts of section \ref{theory}. 

The $\un{x},\un{y}$ may be `protected'
on both sides as follows: 
Step 2.\ is replaced by: A chooses $\{A_1,A_2,A_3,A_4\}$ and works out
$(A_1XA_2,A_3YA_4)$; A then has public key $(A_1XA_2,A_3YA_4)$ and
private key \\ $(X,Y,A_1,A_2,A_3,A_4)$. Choosing $A_1$ from a
set of commuting $RG$-matrices and completing the data $\un{z}$ to $Z$
relative to $RG$ enables $ZX$ to be recovered by A from $ZA_1XA_2$ and
similarly $ZY$ may be recovered by choosing $A_3$ from a set of
commuting $RG_1$-matrices where it's not necessary that $G=G_1$.

Details on orthogonal sets of idempotents are given in section \ref{idemspots}. 
Here we outline a method of public key construction 
using full complete orthogonal sets of idempotents.
Let $\{E_0,E_1,\ldots,E_{n-1}\}$ be an complete orthogonal set of idempotents in
$F_{n\ti n}$. Thus here each $E_i$ has $\rank 1$ (but this is not
necessary in general, see section \ref{idemspots}). 
\begin{enumerate} 
\item A chooses $J\subset I$ with $|J|$ approximately half of $|I|=n$ 
  and constructs
  $X=\sum_{j\in J}\al_jE_j, Y=\sum_{j\in (I-J)}\be_jE_j$ with
  $\al_j\neq 0,\be_j\neq 0$. (Here $\rank X = |J|, \rank Y = |I-J|$. 
It is enough to choose $J$ so that both
  $X,Y$ have large kernel.)
\item A chooses $\{A_1,A_2\}$ non-singular and calculates $\{XA_1,YA_2\}$.
\item A has public key $(XA_1,YA_2)$ and private key $(X,Y,A_1,A_2)$.
\end{enumerate}

When B wishes to communicate $\un{z}$ to A, the process is as
follows.
\begin{enumerate}
\item B transmits $(\un{z}XA_1,\un{z}YA_2)$.
\item A works out $(\un{z}X,\un{z}Y)$ and then $\un{z}(X+Y)$. 
Now $X+Y$ is invertible and its inverse is easy to calculate, by
  Lemma \ref{invert}, and A works out $\un{z}$.
\end{enumerate} 
For each $n$ there are many different complete orthogonal sets of
idempotents in $F_{n\ti n}$. It is not necessary that the particular
complete orthogonal set 
used by A in constructing her public key be known to the world so
in fact an additional step before step 1.\ could be: 
\begin{itemize}
\item[0.] A chooses a complete orthogonal set of idempotents $\{E_0,
  E_1, \ldots, E_{n-1}\}$. 
\end{itemize}

Convolutional methods where appropriate may be used for matrix and
vector-by-matrix multiplications.
The public keys may be changed from time to time. Errors
$(zXA_1+\al,\un{z}YA_2+\be)$ with $\al\neq 0$ or $\be\neq 0$ 
in transmitting $(\un{z}XA_1,\un{z}YA_2)$
are easily detected unless $\al=\ga XA_1$ and $\be=\ga YA_2$ which is
extremely unlikely. This does not prevent an intruder
from trying to falsify a message but a method to prevent this is given in
section \ref{define} below.  
\subsection{From public to private}\label{define} Suppose now A has public key
$(\un{x}A_1,\un{y}A_2)$. This can be made into a private key for B with 
which messages from B only to A may be received:
\begin{itemize}
\item B chooses $\{B_1,B_2\}$ non-singular and transmits $(B_1XA_1,B_2YA_2)$.
\item A chooses $\{A_{B_1},A_{B_2}\}$ and transmits $(B_1XA_{B_1}, B_2YA_{B_2})$.
\item B has key $(XA_{B_1},YA_{B_2})$ with which to send messages to A. 
\end{itemize}
Some simplification is possible when matrices commute.
\begin{itemize}
\item B chooses $\{B_1,B_2\}$ non-singular and transmits
  $(\un{x}A_1B_1,\un{y}A_2B_2)$.
\item A chooses $\{A_{B_1},A_{B_2}\}$ and transmits $(\un{x}A_{B_1}, \un{y}A_{B_2})$.
\item B has (private) 
key $(\un{x}A_{B_1},\un{y}A_{B_2})$ with which to send messages to A.
\end{itemize}
Suppose now B has key $(\un{x}A_{B_1},\un{y}A_{B_2})$ with which to
send message to A. Using this key, B sends message $\un{z}$ to A. Then
A can work out $\un{z}XA_{B_1}$ and check that message has not been
interfered with; an intruder would need to know $XA_{B_1}$ in order to
change message that would not be discovered in a check.
\subsection{Partial public key}\label{define1}
It is useful at times, in particular for authentication and signature
schemes, for a participant to make public a `key' of the form $\un{y}B$
where $\un{y}$ has large kernel and $B$ is invertible. Now $\un{y}B$  
cannot be inverted and so may not be used as a key itself.
It could be used for a message authentication scheme or signature scheme. 

This can be
made private to another particular user by methods similar to those
used in section\ref{define}.
Suppose B has published $\un{y}B$ where $\un{y}$ has large
kernel and $B$ is invertible where  $\{\un{y}, B\}$ are kept private.
\begin{itemize}
\item A chooses $A$ and transmits $AYB$.
\item B chooses $B_A$ and transmits $AYB_A$.
\item A uses $\un{y}B_A$ with B.
\end{itemize}
A simplification using commuting matrices may be initiated similar to
section \ref{define}. 

\subsection{Multiple design for public key} In the above scheme,  vectors 
$\{\un{x},\un{y}\}$ such that their
  completions $\{X,Y\} $
 have large kernels and such that a linear  combination of
 $\{X,Y\}$  is non-singular are chosen. More generally vectors
 $\{\un{x_1},\un{x_2},\ldots, \un{x_r}\}$ such that their
  completions $\{X_1,X_2,\ldots, X_r\} $
 have large kernels and such that a linear  combination of
 $\{X_1,X_2,\ldots,X_r\}$  is non-singular may be chosen. However this
 increases the amount of data to be transmitted as each $\un{z}X_iA_i$
 needs to be transmitted. However again one of these could be laid aside
 authentication; for example a triple of form 
 $(\un{x}A_1,\un{y}A_2.\un{p}A_3)$ each with large kernel such that a
 linear combination of $\{X,Y,P\}$ is non-singular is used but
 $\un{x}A_1=\un{x}A_B$ is private for B only to be used as a check;
 when the message $\un{z}$ is worked out, $\un{z}XA_B$ is used as a
 message authentication check. An original $\un{x}A_1$ may be altered
 to $\un{x}A_B$ by methods similar to those in section \ref{define}.

\section{Cryptography + error-correction}\label{coding} The
cryptographic systems may be used simultaneously with 
error-correcting systems. A basic general reference for coding theory is
\cite{blahut}. 

Let $\un{x_1}$ be $1\ti r$ data to be transmitted securely (with
encryption)  and safely (with error coding) from A to B. 
Let $G$ be a generator $r\ti n$ matrix of an error-correcting code and
 $\un{x}= \un{x_1}G$. 
When matrices don't necessarily commute proceed as follows:
\begin{enumerate}
\item A works out $\un{x}=\un{x_1}G$ chooses $A$ non-singular and transmits
  $\un{x}A$. 
\item B chooses $B$ non-singular and transmits $BXA$. 
\item A transmits $BX$.
\item B calculates $B^{-1}BX = X$ to get $\un{x}$
  which may have errors in transmission.  B decodes the obtained $\un{x}$ to get $\un{x_1}$. 
\end{enumerate}

If using
  $RG$-matrices of the same type only the first row of matrices need
  be worked out. 

In general it is shown in Proposition 
\ref{prop7} that if $G$ is the generator matrix of an
$(n,r)$ code which has rank $r$ and $G$ is a zero-divisor code (as are
cyclic and similar codes, see \cite{hur1}) then
the completion of $\un{x}=\un{x_1}G$ has rank at most $r$ and so $\dim
\ker $ of the completion of $\un{x}$ is $\geq (n-r)$. 

When encryption/decryption matrices to be chosen  commute the
following simplified method may be used:
\begin{enumerate}
\item   A  works out $\un{x}=\un{x_1}G$, chooses $A$ non-singular and transmits
  $\un{x}A$.
\item B chooses $B$ non-singular and transmits $\un{x}AB$.
\item A transmits $\un{x}B$.
\item B works out $\un{x}$ which may have errors in the transmissions
  and decodes to  $\un{x_1}$. 
\end{enumerate}
The code determined by $G$ is an $(n,r)$ code with rank $r$. When for
example $G$ is cyclic then $G$ can be taken as the first $r$ rows of a
circulant matrix which has rank $r$. Then the completion of $\un{x}=
\un{x_1}G$ is a circulant matrix of rank at most $r$. The kernel then
of  this completion is of dimension at least $(n-r)$. See section
\ref{codingas} for details on these aspects.

\subsection{Key exchange with coding} Modify the methods of section
\ref{key} as follows to include error-correcting codes.

Let $\{\un{x_1}, \un{y_1}\}$ be $1\ti r$ vectors so that $\{X_1,Y_1\}$ are
singular with large 
kernel and  some combination of
$\{X_1,Y_1\}$ with a known element or elements 
 is non-singular. Methods for randomly being able to choose such vectors
 $\{\un{x_1},\un{y_1}\}$ are discussed in section \ref{theory} below.
 
Let $G,L$ be generator $r\ti n$ matrices of $(n,r)$ error-correcting
codes.  
\begin{enumerate}
\item A chooses $A$ non-singular and $\un{x_1}$ and transmits $\un{x_1}GA$.
\item  B chooses $B$ and transmits $BXA$ where $X$ is completion of
  $\un{x}=\un{x_1}G$.
\item A transmits $BX$.
\item B now knows $\un{x}$ which may contain errors but is decoded to
  $\un{x_1}$. 
\begin{enumerate}
\item B chooses $B_1$ non-singular and $\un{y_1}$  
so that the completion of a combination
  of $\{\un{x_1},\un{y_1}\}$   with a known element or known elements
  is non-singular and transmits $B_1Y$ where $\un{y}=\un{y_1}L$.
\item A chooses A and transmits $B_1YA$.
\item B transmits $YA$. A knows $\un{y}$ with possible errors and
  decodes to $\un{y_1}$.
\end{enumerate}   
\item Both A and B now have $\un{x_1},\un{y_1} $ from which to form
  the encoding matrix as in section \ref{key}.
\end{enumerate}

\section{Multiple vector design}\label{mult}
The data to be transmitted is broken as $(\un{x_1},\un{x_2},
\ldots, \un{x_r})$. The $\un{x_i}$ need not be of the same length and
are arranged so that the $X_i$ are singular except for possibly a
relatively very small number of these. 

\subsection{General schemes}
$B_i$ and $A_i$ are group ring matrices and $X_i$ and $A_i$ are of the same type.
\begin{enumerate}
\item A chooses $\{A_1,A_2,\ldots, A_r\}$ non-singular and 
transmits $(\un{x_1}A_1, \un{x_2}A_2, \ldots, \un{x_r}A_r)$
\item B chooses $\{B_1,B_2,\ldots, B_r\}$ non-singular and transmits $(B_1X_1A_1,
  B_2X_2A_2, \ldots, B_rX_rA_r)$.
\item A transmits $(B_1X_1, B_2X_2, \ldots,
  B_rX_r)$. 
\item B reads 
  $(\un{x_1},\un{x_2}, \ldots, \un{x_r})$ as the first row of $({X_1},
  {X_2}, \ldots,{X_r})$.

\end{enumerate}

The matrices do not need to commute 
and $B_i$ need not be of the same type  as 
$A_i,X_i$. 
If $B_i$ is of the same type as $X_i,A_i$ then only the first rows of
the matrices need be transmitted in 2.\, 3.\ above. In these cases
convolution methods for multiplication may be used.

\subsubsection{Schemes with some matrices commuting}
Here we have matrices $A_i,B_i$ with $A_iB_i=B_iA_i$ for each $i$; 
it is not necessary that
$A_iB_j=B_jA_i$ for $i\neq j$ nor that $A_iA_j=A_jA_i$ for any $i,j$.
\begin{enumerate}
\item A chooses $\{A_1,A_2,\ldots, A_r\}$ non-singular and 
transmits $(\un{x_1}A_1, \un{x_2}A_2, \ldots, \un{x_r}A_r)$
\item B chooses $\{B_1,B_2,\ldots, B_r\}$ non-singular and transmits
  $(\un{x_1}A_1B_1, \un{x_2}A_2B_2, \ldots,  \un{x_r}A_rB_r)$.
\item A transmits  $(\un{x_1}B_1, \un{x_2}B_2, \ldots,
  \un{x_r}B_r)$.
\item B reads  $(\un{x_1}, \un{x_2}, \ldots,
  \un{x_r})$.
\end{enumerate}

\subsection{Key exchange with multiple vectors and matrices}\label{keyt}
 Key exchange with multiple vector choices may be achieved as follows:

Let $\{\un{x_1},\un{x_2},\ldots, \un{x_r}\}$ and  $\{\un{y_1},\un{y_2},
  \ldots, \un{y_r}\}$ be sets of vectors where for each $i$,
  $\un{x_i}$ has the same length as $\un{y_i}$; these are chosen
  randomly so
  that  $X_i,Y_i$ are singular  (except possibly for a relatively small
number of them)  and 
 some combination of $X_i,Y_i$ is non-singular or some combination of
 $X_i,Y_i$ with a known element or known elements  is non-singular. 
\begin{enumerate}
\item A chooses $\{A_1,A_2,\ldots,A_r\}$ non-singular and
  $(\un{x_1},\un{x_2}, \ldots,\un{x_r})$ and 
transmits $(\un{x_1}A_1,\un{x_2}A_2,\ldots, \un{x_r}A_r)$.
\item  B chooses $\{B_1,B_2,\ldots, B_r\}$ non-singular and 
transmits $(B_1X_1A_1,B_2X_2A_2, \ldots, B_rX_rA_r)$.
\item A transmits $(B_1X_1,B_2X_2,\ldots, B_rX_r)$.
\item B now knows $(X_1,X_2,\ldots, X_r)$. 
A can now repeat the process to get $(Y_1,Y_2,\ldots, Y_r)$
  secretly to B. Or else:
\begin{enumerate}
\item B chooses $(\un{y_1},\un{y_2},\ldots, \un{y_r})$ so that 
 a combination of $X_i,Y_i$ or a combination of $X_i,Y_i$ with a
known element or known elements is  non-singular. 
\item B chooses $\{B_1',B_2',\ldots, B_r'\} $ non-singular and transmits
  $(B_1'Y_1,B_2'Y_2,\ldots, B_r'Y_r)$.
\item A chooses $\{A_1',A_2',\ldots,A_r'\}$ non-singular and transmits
  $(B_1Y_1A_{1}',B_2Y_2A_{2}',\ldots, B_rY_rA_{r}')$. 
\item B transmits $(Y_1A_{1}',Y_2A_{2}',\ldots, Y_rA_{r}')$. 
\end{enumerate}   
\item Both A and B now have the  $X_i,Y_i$ for each $i$ 
from which to form the secret encryption matrices.
\end{enumerate}

\subsection{Key exchange with multiple vectors and coding}
Key exchange with multiple vector choices and coding may be achieved as follows:

Let $\{\un{\un{x_1}},\un{\un{x_2}},\ldots, \un{\un{x_r}\}} $ and
$\{\un{\un{y_1}},\un{\un{y_2}}, 
  \ldots, \un{\un{y_r}}\}$ be sets of vectors where for each $i$,
  $\un{\un{x_i}}$ has the same length as $\un{\un{y_i}}$; these are
  chosen randomly so
  that  their completions $\un{X_i},\un{Y_i}$ are singular except
  possibly for a small number of them and 
 some combination of their completions is non-singular or
 some combination of the 
 completions with known elements are non-singular.  Define
 $\un{x_i}=\un{\un{x_i}}G_i, \un{y_i}=\un{\un{y_i}}K_i$ for appropriately
 sized generator matrices $G_i,K_i$ of error-correcting codes.
\begin{enumerate}
\item A chooses $\{A_1,A_2,\ldots,A_r\}$ non-singular and 
transmits $(\un{x_1}A_1,\un{x_2}A_2,\ldots, \un{x_r}A_r)$.
\item  B chooses $\{B_1,B_2,\ldots, B_r\}$ non-singular and 
transmits $(B_1X_1A_1,B_2X_2A_2, \ldots, B_rX_rA_r)$.
\item A transmits $(B_1X_1,B_2X_2,\ldots, B_rX_r)$.
\item B now knows $(\un{x_1},\un{x_2},\ldots,\un{x_r})$ with possible errors and
  decodes this  to $(\un{\un{x_1}}, \un{\un{x_2}}, \ldots, \un{\un{x_r}})$.
A can now repeat the process to get
$(\un{\un{y_1}},\un{\un{y_2}},\ldots, \un{\un{y_r}})$ 
  secretly to B. Or else:
\begin{enumerate}
\item B chooses $(\un{\un{y_1}},\un{\un{y_2}},\ldots, \un{\un{y_r}})$ so that 
 a combination of $\un{X_i},\un{Y_i}$ or a combination of $\un{X_i},\un{Y_i}$ with a
known element or elements is  non-singular. 
\item B chooses $\{B_1',B_2',\ldots, B_r'\} $ non-singular and transmits
  $(B_1'\un{y_1},B_2'\un{y_2},\ldots, B_r'\un{y_r})$.
\item A chooses $\{A_1',A_2',\ldots,A_r'\}$ non-singular and transmits
  $(B_1Y_1A_{1}',B_2Y_2A_{2}',\ldots, B_rY_rA_{r}')$. 
\item B transmits $(Y_1A_{1}',Y_2A_{2}',\ldots, Y_rA_{r}')$. A then
  knows $(\un{y_1},\un{y_2},\ldots, \un{y_r})$ with possible errors and decodes  to
  $\un{\un{y_1}},\un{\un{y_2}},\ldots, \un{\un{y_r}})$. 
\end{enumerate}   
\item Both A and B now have the  $\un{\un{x_i}},\un{\un{y_i}}$ for each $i$ 
from which to form the secret encryption matrices.
\end{enumerate}

\section{Who is there?}\label{who}
Authentication and/or signature methods may be set up in the usual
way when key exchange and/or public key schemes have been
established. Section \ref{define} shows how public key may in a unique
way be used to establish that message is actually emanating from a
correspondent A; the constituents of the public key for B are changed
so the new key may be used only by a particular A.  

Without using public key or key exchange 
one or both of the following may be requirements.
\begin{itemize}
\item In a message exchange from A to B  it may be the case that a response 
 from B is
required. In certain situations then A requires to know
that no one else is responding pretending to be B. (`Person-in-middle'
problem.)  
 
\item B requires to know that 
message purporting to come from A is actually from A. 
\end{itemize}
To prevent these `person-in-middle' problems proceed as follows. 
Each person X must have a `key' which is of the form $\un{y_X}X$ where
$\un{y_X}$ has large kernel. This must be known  to and trusted by 
the person with whom
the contact is to be made but may be public. This is a `partial'
public key as discussed in   section \ref{define1}. 


\subsection{Prevent Eve pretending to be B}\label{auth}
E(ve), an eavesdropper, looking in at the communications in section
\ref{tryy} or \ref{encrypt1}   
can see $\un{x}A$ and pretends
to be B.  (S)he then applies $E$ to get 
$EXA$ in \ref{tryy} or $\un{x}AE$ in \ref{encrypt1},  which is then transmitted to
A who applies $A^{-1}$ and gives back $EX$ or $\un{x}E$ to E who can then read
off $\un{x}$.

A wants to communicate $\un{x}$ secretly to B. As stated the key  for B 
is constructed from a vector $\un{y}$ and a non-singular matrix
$B$ and only the product $\un{y}B$ is known to A but it may be public.
$\un{y}$ should be chosen
so that its completion $Y$ is singular and 
has large  kernel. However $\un{y}B$ is not a public key for B in
general as it does not have an inverse.

Use the convention that matrices $A$ and $A_*$  for suffices $*$ are
matrices chosen and applied by A(lice) and $B$ and $B_*$ are matrices
chosen and applied by B(ob). 
\subsubsection{With Commuting matrices}\label{comm2} Suppose the
matrices commute.  
\begin{enumerate}
\item B chooses signature key $\un{y}B$ which is revealed.
\item A chooses $\{A,A_1\}$ and 
sends out $(\un{x}A,\un{y}BA_1)$.
\item B chooses $\{B_1,B_2\}$ and transmits $(\un{x}AB_1,\un{y}A_1B_2)$.
\item A works out $
  (\un{x}B_1,\un{y}B_2)$ and  transmits $(\un{x}B_1 - \un{y}B_2)$. 
\item B works out $(\un{x}-\un{y}B_2B_1^{-1})$ 
 and  $\un{y}B_2B_1^{-1}$ and adds the two to get 
  $\un{x}$. 
\end{enumerate} 

In fact for 5.\ $\un{y}B_2B_1^{-1}$ can be worked out when $B_1,B_2$ are
  chosen at 3.\ .
Now A never knows $\un{y}$ in this set-up so B may use the same
$\un{y}B$ in communicating with another.

At point 4.\ A knows $\un{y}B_2$ and may use this it in further
transactions from A to B avoiding some transmissions at points 2.\ , 3.\ 
above. In a sense then when A knows $\un{y}B_2$ it may be as a `key'
for transmissions from A to B and may be used as a non-public signature
of B for A only. Some simplification can be initiated when B is not
worried that A may find $\un{y}$.  
\subsubsection{With matrices which may not commute}\label{sub5}



Similar schemes using non-commuting matrices are developed as follows.

A is required to transmit $\un{x}$ to B and make sure that an
eavesdropper may not pretend to be B. The matrices $A_*$ and $B_*$
need not be of the same type, that is, need not be formed from the 
same group ring.

\begin{enumerate}
\item B chooses $\un{y}$ and $B$ and circulates $\un{y}B$ (keeping
  $\un{y}$ and $B$ secret).  
\item A chooses $A,A_1$ and transmits $(\un{x}A, A_1YB)$.
\item B chooses $B_1,B_2$ and transmits $(B_1XA,A_1YB_2)$
 \item A works out $(B_1X,YB_2)$
and transmits $(B_1X-YB_2)$.
\item B works out $B_1^{-1}(B_1X -YB_2)= X-B_1^{-1}YB_2$ and adds it to
  $B_1^{-1}YB_2$, which may be worked out previously, to get $X$.
\end{enumerate}

At point 4.\ A knows $YB_2$ which can be used for further transactions
from A to B.
 
Variations on the above are easily constructed and designed.

\subsection{To be sure}\label{sub2}
Suppose now A communicate with B and B wishes to be sure 
that the message is from A.

\subsubsection{Where from,  commuting} 
Each participant X has $\un{y_X}X$, where $X$ is
invertible. When  $RG$-matrices are used the completion of $\un{y_X}$
should be singular with large kernel. $\un{y_X}$ and $X$ are kept
secret. 
\begin{enumerate} \item A chooses $A_1$ and transmits $\un{y_A}A_1$.
\item B chooses $B_1$ and transmits $\un{y_A}A_1B_1$. 
\item A transmits $\un{y_A}AB_1$ and B checks this. 
\end{enumerate}

(At stage 2.\ B can work out $\un{y_A}AB_1$ for checking at 3.\.)

\subsubsection{Where from, non-commuting}\label{sub4}

Suppose A wishes to communicate with B and B wishes to be sure 
that the message is from A.
Each participant X publishes $XY_X$, where $X$ is
invertible and $Y_X$ is singular with large kernel. $Y_X$ and $X$ are kept
secret.  
\begin{enumerate} \item A chooses $A_1$ and transmits $A_1Y_A$.
\item B chooses $B_1$ and transmits $A_1Y_AB_1$. (At this stage B can work
  out $AY_AB_1$.)
\item A transmits $AY_AB_1$ and B checks this. 
\end{enumerate}

\subsection{Combined}\label{comb}
The methods of \ref{auth}, \ref{sub2} may be combined as required or necessary. 
A wishes to communicate with B; A requires that an eavesdropper may
not pretend to be B and B requires a signature so that (s)he knows the
message is from A. The methods are fairly straightforward and  
details are omitted. 



\subsection{Multiple vector design: Prevention} The authentication, signature
methods devised above may also be extended to multiple vector
design. We outline  just one of the methods.
\subsubsection{Prevent E pretending}
It is required when A communicates with B that E may not reply to
A succeeding in pretending to be B.
\begin{enumerate}
\item B has a  key $(\un{y_1}B_1,\un{y_2}B_2, \ldots,
  \un{y_s}B_r)$ which is revealed at a particular time and known and
  trusted by A. 
\item A sends out $((\un{x_1}A_1,\un{x_2}A_2,\ldots,
  \un{x_r}A_r),(\un{y_1}B_1A_{1}',
  \un{y_2}B_2A_{2}',\ldots,\un{y_s}B_rA_{r}'))$ where
  $(\un{x_1},\un{x_2},\ldots, \un{x_r})$ is the data to
  be transmitted and the size of $x_i$ is the same as that of $y_i$.  
\item B chooses $(B_{1}',B_{2}',\ldots, B_{r}')$ and transmits 
$((\un{x_1}A_1B_{1}',\un{x_2}A_2B_{2}',\ldots, \un{x_r}A_rB_{r}'),
  (\un{y_1}A_{1}', \un{y_2}A_{2}', \ldots, \un{y_r}A_{r}'))$.
\item A 
  transmits $(\un{x_1}B_{1}',\un{x_2}B_{2}',\ldots, \un{x_r}B_{r}') -
  (\un{y_1},\un{y_2}, \ldots, \un{y_r})$. 
\item B works out $(\un{x_1},\un{x_2}, \ldots, \un{x_r})-
  (\un{y_1}B_{1}'^{-1}, \un{y_2}B_{2}'^{-1}, \ldots, \un{y_r}B_{r}'^{-1})$ and
  adds this to $(\un{y_1}B_{1}'^{-1}, \un{y_2}B_{2}'^{-1}, \ldots,
  \un{y_r}B_{r}'^{-1})$, which has already been worked out, to get
  $(\un{x_1},\un{x_2}, \ldots, \un{x_r})$. 
\end{enumerate} 
\subsection{Authentication, signature, + coding}
Authentication and signature with coding may similarly be
implemented. The details are omitted. Basically  first of all the data is encoded as
$\un{x}= \un{x_1}G$. Then when $\un{x}$ is received  with
possible errors it is decoded to $\un{x_1}$. 



\section{Theory}\label{theory}

\subsection{Vector by matrix multiplication}
Much is contained  in the literature on vector-matrix/matrix-vector 
multiplication. The multiplication  can be  very fast when the matrix has a
structure as for example if the matrix is an $RG$-matrix; a
circulant matrix is such an example. Group ring matrices  of the
groups $C_n,C_2^n, C_p^n$ and in general abelian
groups are particularly suitable. 
Vector-matrix multiplication in these cases using fast Fourier
transform or Walsh-Hadamard fast transform (for $FC_p^n$) can be done
in $0(n\log n)$ time. 

The multiplication can be done in $0(n\log n))$ time for
$FG$-matrices when $G$ is a finite supersolvable group;
this comes from Baum's
Theorem  \cite{baum} which states that every
supersolvable finite group has a DFT (Discrete Fourier Transform) 
algorithm running in $O(n\log n)$
time. This more general notion is not discussed further here  

\subsection{Rank and nullity}
 Knowledge of a singular matrix and a
product of this singular matrix by a non-singular matrix does not lead
to knowledge of the non-singular matrix. It is desirable that the
kernel 
of the singular matrix be relatively large. 
 The nullity  of an $n\ti n$  matrix with  $A^t=0$ is greater than 
 or equal to $\frac{n}{t}$; see Corollary \ref{rank1} below. 
For large $n$ and relatively small $t$
 a solution of a system of equations as $AX=B$, with $X$ as
 indeterminates or $\un{x}A=\un{b}$ with indeterminates $\un{x}$ 
then has many possible solutions.    
If
 $A$ has (relatively) small rank then so does $AY$ and $YA$ for any $Y$ as 
 $YA\leq \min \{\rank X, \rank A\}$ and $AY\leq \min \{\rank, Y, \rank
A$\}. 

\begin{lemma}\label{rankkk}
Let $A$ be an $n\ti n$ matrix such that $A^t=0$. Then $\rank A \leq
\frac{n(t-1)}{t}$.
\end{lemma}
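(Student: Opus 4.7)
The plan is to exploit the ascending kernel filtration
$$\{0\} \subseteq \ker A \subseteq \ker A^2 \subseteq \cdots \subseteq \ker A^t = F^n,$$
where the final equality uses $A^t=0$. Set $n_i = \dim \ker A^i$ and $d_i = n_i - n_{i-1}$ for $1 \leq i \leq t$ (with $n_0 = 0$). Then $\sum_{i=1}^t d_i = n$, and in particular $d_1 = \dim \ker A$ is exactly the nullity of $A$, so the lemma will follow once I show $n \leq t\, d_1$, i.e.\ $\dim \ker A \geq n/t$.

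The core step is the monotonicity $d_1 \geq d_2 \geq \cdots \geq d_t$. I would establish this by showing that $A$ induces a well-defined \emph{injective} linear map
$$\bar A_i \colon \ker A^{i+1}/\ker A^i \longrightarrow \ker A^i/\ker A^{i-1}, \qquad v + \ker A^i \mapsto Av + \ker A^{i-1}.$$
Well-definedness: if $v \in \ker A^{i+1}$ then $A^{i}(Av) = A^{i+1}v = 0$, so $Av \in \ker A^i$; and if $v \in \ker A^i$ then $Av \in \ker A^{i-1}$. Injectivity: if $Av \in \ker A^{i-1}$ then $A^{i-1}(Av)=A^iv=0$, so $v \in \ker A^i$, i.e.\ $v$ represents the zero class. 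This gives $d_{i+1} \leq d_i$.

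Putting it together: $n = \sum_{i=1}^t d_i \leq t\, d_1 = t\cdot \dim \ker A$, so $\dim \ker A \geq n/t$ and hence $\rank A = n - \dim \ker A \leq n - n/t = n(t-1)/t$.

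The only mildly delicate point is setting up the induced quotient map $\bar A_i$ and verifying its injectivity; everything else is bookkeeping on the dimension-jump sequence of a nilpotent operator. There is no real obstacle here, only the need to present the filtration cleanly; the argument is essentially the standard proof that the Jordan blocks of a nilpotent matrix have weakly decreasing size counts when ordered by length.
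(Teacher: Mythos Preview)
Your argument is correct, but it follows a genuinely different route from the paper's. The paper argues by contradiction using Sylvester's rank inequality $\rank PQ \geq \rank P + \rank Q - n$: assuming $\rank A > n(t-1)/t$, an easy induction gives $\rank A^r > n(t-r)/t$ for all $r\leq t$, and in particular $\rank A^{t-1} > n/t$; but $A\cdot A^{t-1}=0$ forces $\rank A^{t-1}\leq \dim\ker A < n/t$, a contradiction. Your approach instead exploits the kernel filtration of the nilpotent operator and the monotonicity $d_1\geq d_2\geq\cdots\geq d_t$ of the jump dimensions, obtained via the injective induced maps $\ker A^{i+1}/\ker A^i \hookrightarrow \ker A^i/\ker A^{i-1}$. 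Your method is more structural and yields the nullity bound $\dim\ker A \geq n/t$ directly (which the paper records separately as a corollary), while the paper's argument is more elementary in that it uses only a single rank inequality and no quotient spaces. Both are short and standard; yours is essentially the Jordan-block counting argument, the paper's is a straight Sylvester estimate.
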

\begin{proof}
Note first that for $n\ti n$ matrices $\rank PQ\geq \rank P + \rank Q
-n$.

Suppose then $\rank A > \frac{n(t-1)}{t}$. We now show by induction
that $\rank A^r > \frac{n(t-r)}{t}$ for $1\leq r \leq t$. The case $r=1$ is part of the
  hypothesis. Suppose then $\rank A^k > \frac{n(n-k)}{t}$ for $1\leq k
  < t$. 
Hence $\rank A^{k+1}=\rank AA^k \geq \rank A + \rank A^k -n >
 \frac{n(t-1)}{t} + \frac{n(t-k)}{t}-n = \frac{n(t-(k+1))}{t}$ as
 required.

Now  $AA^{t-1}=0$ implies that $A^{t-1}\subseteq \ker A$ and so $\rank
A^{t-1} \leq \dim \ker A$. But $\rank A + \dim \ker A = n$ implies
$\dim \ker A =
n - \rank A = n - \frac{n(t-1)}{t} = \frac{n}{t}$ and so $\rank
A^{t-1} \leq \frac{n}{t}$. However letting $r=t-1$ in $\rank A^r >
\frac{n(t-r)}{t}$ implies $\rank A^{t-1} > \frac{n}{t}$ which is a
contradiction. Hence $\rank A \leq \frac{n(t-1)}{t}$.
\end{proof} 
\begin{corollary}\label{rank1} Suppose $A^t=0$ for an $n\ti n$ matrix. Then $\dim
  \ker A \geq \frac{n}{t}$.
\end{corollary}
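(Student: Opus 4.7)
The statement is an immediate consequence of Lemma \ref{rankkk}, which was just established and gives $\rank A \leq \frac{n(t-1)}{t}$ whenever $A^t = 0$. My plan is simply to apply the rank–nullity theorem.

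The rank–nullity theorem for an $n \times n$ matrix $A$ asserts $\rank A + \dim \ker A = n$. Rearranging and substituting the bound from Lemma \ref{rankkk} gives
\[
\dim \ker A \;=\; n - \rank A \;\geq\; n - \frac{n(t-1)}{t} \;=\; \frac{n}{t},
\]
which is exactly the claim. There is no genuine obstacle here: the hard content is the inductive argument on $\rank A^r$ already carried out in the proof of Lemma \ref{rankkk}, and the corollary is just the dual formulation in terms of nullity rather than rank. A single line of arithmetic suffices.
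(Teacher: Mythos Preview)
Your proof is correct and matches the paper's own argument exactly: the paper simply says the corollary follows from Lemma~\ref{rankkk} together with $\rank + \dim\ker = n$, which is precisely the rank--nullity computation you wrote out.
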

\begin{proof} This follows from the Lemma since $\rank + \dim \ker =
  n$. 
\end{proof}

\medskip 
The following may also be shown but is not relevant here:
\begin{lemma} Suppose $A$ is an $n\ti n$ matrix with $A^t=0$. Suppose
  also $\rank A = \frac{n(t-1)}{t}$. (This is largest it can be by
  Lemma \ref{rankkk}.). Then $\rank A^{t-1}= \frac{n}{t}$. In
  particular this implies $A^{t-1}\neq 0$. 
\end{lemma}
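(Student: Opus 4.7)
The plan is to mirror the argument of Lemma \ref{rankkk}, but now exploiting the equality hypothesis on $\rank A$ to get matching upper and lower bounds on $\rank A^{t-1}$.

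First I would establish the lower bound by induction on $k$, showing that $\rank A^k \geq \frac{n(t-k)}{t}$ for $1\leq k \leq t$. The base case $k=1$ is the hypothesis. For the inductive step I use the same rank inequality invoked in Lemma \ref{rankkk}, namely $\rank PQ \geq \rank P + \rank Q - n$ for $n\times n$ matrices $P,Q$. Applied with $P=A$ and $Q=A^k$, this gives
\[
\rank A^{k+1} \geq \rank A + \rank A^k - n \geq \frac{n(t-1)}{t} + \frac{n(t-k)}{t} - n = \frac{n(t-(k+1))}{t},
\]
closing the induction. Specialising to $k=t-1$ yields $\rank A^{t-1} \geq \frac{n}{t}$.

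Next I would establish the matching upper bound. Since $A^t = 0$, the identity $A\cdot A^{t-1} = 0$ forces the image of $A^{t-1}$ to lie in $\ker A$, so
\[
\rank A^{t-1} \leq \dim \ker A = n - \rank A = n - \frac{n(t-1)}{t} = \frac{n}{t}.
\]
Combining the two bounds gives $\rank A^{t-1} = \frac{n}{t}$, and in particular (as $n/t > 0$) we get $A^{t-1} \neq 0$.

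There is no real obstacle here; the whole argument reuses the ingredients already present in the proof of Lemma \ref{rankkk}. The only small point to be careful about is that the induction needs the equality $\rank A = \frac{n(t-1)}{t}$ rather than a strict inequality, because we want the lower bound on $\rank A^{t-1}$ to match the ceiling coming from $A^{t-1}\subseteq \ker A$ exactly; with a strict inequality one would overshoot and contradict the ceiling, which is precisely how Lemma \ref{rankkk} was proved.
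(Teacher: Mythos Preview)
Your proof is correct. Note, however, that the paper does not actually supply a proof of this lemma: it is stated with the remark ``The following may also be shown but is not relevant here'' and left without argument. So there is no proof in the paper to compare against directly.

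That said, your approach is exactly the natural one implicit in the paper's treatment: it reuses the two ingredients from the proof of Lemma~\ref{rankkk} --- the Sylvester rank inequality $\rank PQ \geq \rank P + \rank Q - n$ driving the induction, and the containment of the image of $A^{t-1}$ in $\ker A$ --- but with the strict inequalities of the contradiction argument replaced by weak ones coming from the equality hypothesis $\rank A = \frac{n(t-1)}{t}$. The two bounds then meet rather than collide, which is precisely the point of your closing remark. This is surely what the author had in mind.
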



\subsection{$RG$-matrices}\label{results}

An $RG$-matrix is a matrix corresponding
to a group ring element in the isomorphism from the group ring
into the ring of $R_{n\times n}$  matrices, see for example \cite{hur}.  
Specifically suppose $w = \di\sum_{i = 1}^n\al_{g_i}g_i \in RG$ where
$G=\{g_1,g_2,\ldots, g_n\}$ is a listing of the elements of $G$. 
The $RG$-matrix of $w$ denoted by $M(RG,w)$ is defined as follows:

 $\begin{pmatrix}
\alpha_{g_1^{-1}g_1} & \alpha_{g_1^{-1}g_2} &\alpha_{g_1^{-1}g_3} 
 &  \ldots & \alpha_{g_1^{-1}g_n} \\

\alpha_{g_2^{-1}g_1} & \alpha_{g_2^{-1}g_2} &\alpha_{g_2^{-1}g_3} 
 &  \ldots & \alpha_{g_2^{-1}g_n} \\ 

\vdots & \vdots & \vdots &\vdots &\vdots \\

\alpha_{g_n^{-1}g_1} & \alpha_{g_n^{-1}g_2} &\alpha_{g_n^{-1}g_3} 
 &  \ldots & \alpha_{g_n^{-1}g_n} 
\end{pmatrix} $

The matrix is in $R_{n\times n}$ and 
 depends on the listing of the elements. 
Changing the listing changes the matrix;  if $A,B$ are $RG$-matrices
for the element $w\in RG$ relative to different listings then $B$ may be
obtained from $A$ by a sequence of [interchanging two rows and
 then interchanging the corresponding two columns].  

Given the entries of the first row of an  $RG$-matrix, and a listing,  
the entries of the other rows are determined from the multiplication
of the elements of $G$ and 
each row and each column is a permutation of the first row.


\begin{theorem}
Given a listing of the elements of a group $G$
  of order $n$ there
is a bijective ring homomorphism between $RG$ and the
$n\times n$ \, $RG$-matrices. This bijective ring homomorphism is given
  by $\sigma: w \mapsto M(RG,w)$.
 \end{theorem}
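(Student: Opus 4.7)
The plan is to verify each piece of the definition: additivity, multiplicativity, the unit, injectivity, and surjectivity, noting that surjectivity is essentially free since the $RG$-matrices are by definition the image of $\sigma$. I would spell out that $\sigma\colon RG \to R_{n\times n}$ sends $w=\sum_i \alpha_{g_i}g_i$ to the matrix whose $(i,j)$-entry is $\alpha_{g_i^{-1}g_j}$, and immediately read off additivity from the entrywise formula. For the unit, one checks that if $w = 1_R \cdot e$ (with $e$ the identity of $G$) then $\alpha_{g_i^{-1}g_j}=1$ iff $g_i=g_j$, so $\sigma(1_{RG})=I_n$.

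For injectivity, I would observe that as $j$ varies with $i=1$ fixed, the elements $g_1^{-1}g_j$ range over all of $G$, so the first row of $M(RG,w)$ already determines every coefficient $\alpha_h$ of $w$; hence $\sigma(w)=0$ forces $w=0$. (This is essentially Lemma \ref{rg} applied in the trivial case.) Surjectivity onto the set of $RG$-matrices is immediate from the definition of that set as the image of $\sigma$.

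The only step that requires actual manipulation is multiplicativity, which I expect to be the main (if not very hard) obstacle. Writing $w=\sum_h \alpha_h h$ and $v=\sum_k \beta_k k$, one has $wv = \sum_m \gamma_m m$ with $\gamma_m = \sum_{h\in G}\alpha_h \beta_{h^{-1}m}$. Then
\[
\bigl(M(RG,w)M(RG,v)\bigr)_{ij} = \sum_{k=1}^n \alpha_{g_i^{-1}g_k}\,\beta_{g_k^{-1}g_j}.
\]
The change of summation variable $h = g_i^{-1}g_k$ (which is a bijection of $G$ as $k$ ranges over $1,\ldots,n$) converts $g_k^{-1}g_j$ into $h^{-1}(g_i^{-1}g_j)$, so the sum becomes $\sum_h \alpha_h \beta_{h^{-1}(g_i^{-1}g_j)} = \gamma_{g_i^{-1}g_j}$, which is exactly the $(i,j)$-entry of $M(RG,wv)$. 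This gives $\sigma(wv)=\sigma(w)\sigma(v)$ and completes the argument. The only subtle point is making sure the reindexing $k\mapsto h=g_i^{-1}g_k$ is a bijection, which holds because left multiplication by $g_i^{-1}$ permutes $G$.
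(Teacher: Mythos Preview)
Your proof is correct and complete: the verification of additivity, the unit, injectivity, surjectivity, and multiplicativity are all sound, and the reindexing $h=g_i^{-1}g_k$ in the multiplicativity step is exactly the right bijection to identify the matrix product with the group-ring product.

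As for comparison: the paper does not actually supply a proof of this theorem. It is stated as background and attributed to \cite{hur} (``Group rings and rings of matrices''), so there is nothing in the present paper to compare your argument against. Your write-up would serve perfectly well as an in-text proof were one desired; the only cosmetic point is that your appeal to Lemma~\ref{rg} for injectivity is not really needed, since (as you yourself note) reading off the first row already gives all coefficients of $w$ directly.
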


An $RG$-matrix for a cyclic group $G$ is a circulant matrix; 
an $RG$-matrix when $G$ is a dihedral group 
is one of the form $\begin{ssmatrix} A&B
\\ B&A\end{ssmatrix}$  (in a natural listing of the elements of $G$), 
where $A$ is circulant and $B$ is reverse circulant. 

For $w\in RG$ the corresponding capital letter
$W$ denotes the image of $w$ in the ring of $R_{n\ti n}$ matrices,
relative of course to a particular listing of the elements of $G$. For 
a vector $\un{x}\in R^n$ and a fixed listing of a group $G$ 
by convention the capital letter $X$, without underlining, denotes the
completion of $\un{x}$.  
 Say $w\in RG$ is singular
if and only if $W\in R_{n\ti n}$ is a singular matrix and $w$ is
non-singular if and only if $W$ is a non-singular matrix. Thus when
$R$ is a field $w$ is
singular if and only if $w$ is a zero-divisor in $RG$, and $w$ is non-singular
if and only if $w$ is a unit in $RG$, \cite{hur}. 


\subsection{Commuting matrices}\label{comm}
Matrices that commute with one another include group ring matrices
corresponding to group rings of abelian groups. Convenient such group ring
matrices include:
\begin{enumerate}
\item Circulant matrices over any field; in particular 
 circulant matrices over finite fields such as $Z_p$ for $p$ a prime.  
\item $RG$-matrices from $RG=Z_2C_2^{n}$. An element $w=\sum_{i=0}^{2^n-1}\al_ia_i$
  is invertible if and only if $\sum_{i=0}^{2^n-1}\al_i=1$, that is, if and
  only if there are an odd number of non-zero coefficients in $w$. For say
  $n=1024$ there are $2^{1023}$ such invertible elements and
  $2^{1023}$ elements whose square is zero.
\item Matrices from   $Z_pC_p^{n}$. Let 
  $w=\sum_{i=0}^{p^n-1}\al_ia_i\in \Z_pC_p^n$, with $\al_i\in \Z_p,
  a_i\in C_p^n$. Since $w^p=\sum \al_i$, it follows that $w$ 
  is invertible if and only if $\sum_{i=0}^{p^n-1}\al_i \neq 0$. 
 If this sum $s=\sum \al_i$ is zero then $w$ is a
  zero-divisor with $w^p=0$ and if this sum $s\neq 0$ then $w^{-1}=s^{-1}w^{p-1}$.

For say
  $n=102$ there are $p^{101}(p-1)$ such invertible elements and
$p^{101}-1$ such non-zero elements which are zero-divisors satisfying
$w^p=0$. It is easy to choose  randomly an 
invertible element whose inverse is easy to construct or a zero-divisor element
 with relatively small power equal to zero.
\end{enumerate} 
The types of matrices used for the designs and for the
transmissions of vectors  need not be the same.



\subsection{Construction methods}\label{singular} For our constructions it is
required to randomly choose, from a large available pool, matrices and
vectors of the following types:
\begin{itemize}
\item Singular matrices $A$ with large kernel.
\item Non-singular matrices $A$ such that  the inverse of
  $A$ is easy to compute;
\item Vectors  $\un{x},\un{y}$ such that $X,Y$ have large kernels and
  a combination of $X,Y$ with a known element or known elements is
  non-singular, the inverse of which is easy to obtain.

\end{itemize} 

Further:
\begin{itemize}
\item Given data $\un{x}$ it is required to construct
$\un{\un{x}}$ from which $\un{x}$ may directly be obtained and for
  which the completion of 
$\un{\un{x}}$ is singular with large kernel.
\end{itemize}

Here we show how such constructions may be obtained in various group ring matrices.

\subsubsection{In $\Z_2C_2^n$}
Consider $\Z_2C_2^n$. 
 An element
$w=\sum_{i=0}^{2^n-1}\al_ia_i \in \Z_2C_2^n$ satisfies
$w^2=\sum_{i=0}^{2^n-1} \al_i$ and so $w^2=0$ or $w^2=1$ according to
whether the sum of the coefficients of $w$ is even or odd. When
the sum is even then $w^2=0$ and so indeed $w$ is singular with large
kernel by Corollary \ref{rank1}. 
In $\Z_2C_2^n$ it is easy to arrange for any data $\un{x}$ that
if $\un{x}^2\neq 0$ then  adding one known element to $\un{x}$ ensures
the square of the data is zero. When $\un{x}^2=0$ then $\rank X$,
where $x$ is the completion of $\un{x}$ is at most $\frac{n}{2}$ and
thus $\dim \ker X \geq \frac{n}{2}$; for large $n$ this ensures $\dim
\ker X$ is large. Thus there are at least $2^{\frac{n}{2}}$ solutions
in $\un{z}$ to $X\un{z}=\un{b}\T$ or $XZ=P$ for unknown matrix $Z$. 

Thus in $\Z_2C_2^n$: 
\begin{itemize}
\item Random Matrices $X$ may be chosen such that $X^2 =0$ and so has
  large kernel; 
\item Random Matrices $A$ may be chosen 
such that $A^2=1$ and so the inverse is easy to  obtain. 
\item Random $\un{x},\un{y}$ may be chosen so that $X^2=0, Y^2=0$ and then
  both $X,Y$ are singular with large 
  kernel. Combinations such as $X+Y+1$, $X+Y+H$ where $h\in C_2^n$
  and  $X+Y+w$ where 
  $w$ has an odd number of non-zero terms have their squares equal to $1$. 
\item If $\un{x}$ is any vector considered in $\Z C_2^n$ then either $X^2=0$
  and has large kernel or else adding an element $h$ of $C_2^n$ ($h$
  could be the identity)
  ensures $(X+H)^2=0$, or more generally adding an element $w$ with an
  odd number of non-zero terms ensures $(X+W)^2=1$.
\end{itemize} 

For any ring $R$, an $RC_2$ matrix is
one of the form $\begin{ssmatrix} \al & \be \\ \be &
  \al \end{ssmatrix}$ with $\al, \be \in R$.  An $RC_2^n$ matrix for $n\geq 2$ 
is  one of the form $\begin{pmatrix} A_{n-1} & B_{n-1} \\ B_{n-1} &
  A_{n-1} \end{pmatrix}$ where $A_{n-1},B_{n-1}$ are
$RC_2^{n-1}$-matrices. An $RC_2^n$-matrix is completely determined by
its first row as is any $RG$-matrix.

Any  $RC_2^n$-matrix is diagonalised by the
Walsh-Hadamard $2^n\ti 2^n$ matrix which is defined as follows.
The Walsh-Hadamard $2\ti 2$ matrix is $W_2=\begin{ssmatrix} 1 & 1 \\ 1 &
  -1 \end{ssmatrix}$ and for $n\geq 2$ the Walsh-Hadamard $2^n\ti 2^n$
matrix is $W_{2^n} =\begin{pmatrix} W_{2^{n-1}} & W_{2^{n-1}}
  \\  W_{2^{n-1}} & -W_{2^{n-1}}\end{pmatrix}= W_2\otimes W_{2^{n-1}}$
where $\otimes$ denotes tensor product. 
It is known that the Walsh-Hadamard transformation can be
performed in time $O(m\log m)$ ($m=2^n$) and thus vector and matrix operations
with $RC_2^n$-matrices can be done in $O(m\log m)$ time.   
Thus using $\Z_2C_2^n$ the constructions may be done in $O(m\log m)$
time using Walsh-Hadamard transformations. 

\subsubsection{In $\Z_pC_p^n$}
Consider now the data $\un{x}=(\al_0,\al_1,\ldots,\al_{p^{n}-1})$ to be
 in $\Z_pC_p^n$, that is $\un{x} =\di\sum_{i=0}^{p^n-1}\al_ig_i$ where
\\ $\{g_0,g_1,\ldots, g_{p^n-1}\}$ are the elements of $\Z_p^n$ and $\al_i\in
Z_p$. Each $g_i$ satisfies $g_i^p=1$. Then
$\un{x}^p=\di\sum_{i=0}^{p^n-1}\al_i^{p}=\sum_{i=0}^{p^n-1}
\al_i=\ep(\un{x})$ where $\ep(\un{x})$ denotes the augmentation of
$\un{x}$.
If $\ep(\un{x})=0$ then $\un{x}^p=0$. If $\ep(\un{x})\neq 0$
then $(\un{x}-\ep(\un{x})g)^p=0$ for any $g\in C_p^n$. More generally
$(\un{x}+\sum_{j\in J}b_jg_j)^p=0$ when $\sum_{j\in
  J}b_j=-\ep(\un{x})$ for $J\subset \{0,1,\ldots, p^n-1\}$. 

Thus it is easily arranged for the data $\un{x}$ to satisfy
$\un{x}^p=0$ by adding a known element or known elements as necessary. If now
$\un{x}^p=0$ then the completion $X$ of $\un{x}$ has $\dim \ker X \geq
\frac{n}{p}$. Hence any system of equations $X\un{z}=\un{b}$ for
 unknown $\un{z}$ has $p^{\frac{n}{p}}$ solutions.

In $C_{p}^n$ every element has order $p$ so $w=
\sum_{i=0}^{2^n-1}\al_ig_i\in \Z_pC_p^n$ satisfies
$w^p=\sum_{i=0}^{2^n-1}\al$. When $w^p \neq 0$ then $w^p=\ep(w)\neq 0$
and the inverse of $w$ is easy to obtain. 

Thus in $\Z_pC_p^n$: 
\begin{itemize} \item Random matrices $X$ may be chosen such that
  $X^p=0$ and so $X$ has  large kernel. 
\item Random matrices $A$ may be chosen  such that $A^p=\al I$ for a
  scalar $\al$ and  hence the inverse of $A$ is easily obtained.
\item It is possible to randomly choose $X,Y$ so that $X^p=0,Y^p=0$
  and so $X,Y$ have large kernel and $(X+Y+1)^p$ or $(X+Y+H)^p$ with
  $h\in C_p^n$ or $(X+Y+W)^p$ for various $w\in C_p^n$ to have value
  $\al I$ for a scalar $\al$. 
\item If $\un{x}$ is any vector considered in $\Z_pC_p^n$ then either $X^p=0$
  or else $(X+H)^p=I$ for any $h\in C_p^n$; more generally $X^p=0$ or
  $(X+W)^p=0 $ for  $\ep(w)=-\ep(\un{x})$. 
\end{itemize}

A generalised Walsh-Hadamard matrix $WH(p^n)$ is defined as
follows.  
$WH(p)=F_p$ where $F_p$ is the Fourier $p\ti p$ matrix and $WH(p^n)=
WH(p) \otimes WH(p^{n-1})$ for $n\geq 2$ where $\otimes$ denotes tensor
product. This diagonalises any
$RC_p^n$-matrix when the Fourier matrix exists. 
Using generalised Walsh-Hadamard matrices
computations in $\Z_pC_p^n$ can be done in $O(m\log m)$ time, $m=p^n$.    

\subsubsection{With circulants}
Suppose circulant matrices derived
from $\Z_2C_n$ are used where $n=2m$ is large. Let  $C_n$ be generated by
$a$. Let $J\subset \{0,1,\ldots,m\}$ be  chosen randomly. 
Now $w= (\sum_{j\in J}
(a^j+a^{m+j}))+a^m$ satisfies $w^2=(\sum_{j\in J}(a^{2j}+ a^{2m+2j})) +
a^{2m}= (\sum_{j\in J}(a^{2j} + a^{2j})) + 1 = 1$. (One could also use 
 $w= (\sum_{j\in J}
(a^j+a^{m+j}))+1$.) The circulant matrix
$W$ which is the completion of $w$ satisfies $W^2=1$. The number of choices for
such $J$ is of order $2^m$. 
Then $A,B$ above can then be constructed from choices of $J$. 

Consider $\Z_pC_{pn}$ where $C_{pn}$ is generated by $a$. It is easy
to build singular elements $w$ with $w^p=0$. Now $(a^i+(p-1)a^{i+n})^p 
=a^i+(p-1)a^i=0$ and also $(a^i+a^{i+n} + \ldots + a^{i+(p-1)n})^p= 0$
and other similar constructions. Taking a sum of such types gives an
element $w$ with $w^p=0$ whose completion is a singular element
with $\dim \ker \geq \frac{n}{p}$. 

Matrix and vector multiplication for circulant matrices ($RG$-matrices
for $G$ cyclic) can be done
with fast Fourier transform and so can be done in $O(n\log n)$ time.   

Thus in $\Z_pC_{pm}$ random matrices may be chosen as follows:
\begin{itemize}
\item Random matrices $X$ such that $X^p=0$ and so $X$ has
  large kernel. 
\item Random matrices $A$ such that $A^p=\al I$ for a scalar $\al$ and
  hence the inverse of $A$ is easily obtained.
\item It is possible to randomly choose $X,Y$ so that $X^p=0,Y^p=0$
  and so $X,Y$ have large kernel and $(X+Y+1)^p$ or $(X+Y+H)^p$ with
  $h\in C_p^n$ or $(X+Y+W)^p$ for various $w\in C_p^n$ to have value
  $\al I$ for a scalar $\al$.
 \end{itemize} 

Given data $\un{x}=(\al_0,\al_1,\ldots,\al_{m-1})$ of length $m$ we
need this to be
considered in a cyclic group 
ring so  that its completion  is singular of large kernel. 

Let $\al_i\in \Z_2$. 
Consider the group ring $\Z_2C_{2m}$ where $g$ generates $C_{2m}$ and 
let $\un{\un{x}} = \sum_{i=0}^{m-1}\al_ig^i +
\sum_{i=0}^{2m-1}\al_ig^{i+m}$. (Yes, $g^i$ and $g^{i+m}$ have the
same coefficient.) Then $\un{\un{x}}^2 = 0$ and clearly
$\un{x}$ is embedded in $\un{\un{x}}$ and the completion of
$\un{\un{x}}$ has large kernel.

Let $\al_i\in \Z_p$. Consider the group ring $\Z_pC_{pm}$ and suppose
$C_{pm}$ is generated by $g$. Consider
$\un{\un{x}}=\sum_{i=0}^{m-1}\al_ig^i - \sum_{i=0}^{m-1}\al_ig^{i+p}$.
Then $\un{\un{x}}^p = 0$ and $\un{x}$ is embedded in $\un{\un{x}}$.



\subsubsection{Achieving properties for matrices of general group rings}
For properties of group rings and related algebra consult \cite{sehgal}. The
augmentation mapping $\ep: RG \rightarrow R$ is the ring homomorphism 
 given by 
$\ep(\di\sum_{g\in G}\al_gg) =\di\sum_{g\in G}\al_g$.  Let $R$ be a field
Suppose now $w$ is non-singular. Then $\ep(w)$ is a unit of $F$ and so
is non-zero. Then $w'=w-\ep(w)1_g$ or $w'=w-\ep(w)g$ for any
$g\in G$ satisfies $\ep(w')=0$ and so $w'$ is singular.  Then $W'$
is singular.  

Let $\un{x_1}$ be $1\ti r$ data considered as an element of a group ring $FH$
where $F$ is a field. If a key has already been exchanged there is no
need to make the pieces of data singular.

Let $G$ be an $r\ti n$ generator matrix of a zero-divisor $(n,r)$
code over $FH$. Then by Proposition \ref{prop7} the completion $X$ of
$\un{x}=\un{x_1}G$ has rank at most $r$. Thus $\dim \ker X\geq
(n-r)$. Provided $r$ is not very large then given large $n$ it is
impossible to deduce $X$ from $AX$ or $XA$ for an unknown (reasonable)
matrix $A$. For example the code could have large rate say
$\frac{3}{4}$ and then $\dim \ker X \geq \frac{n}{4}$; for $n$
large then also $\dim \ker X$ is large.   This is one way to ensure
the data to be transmitted has large kernel and at the same time
enabling error-correcting.

Thus if $\un{x}$ is data to be transmitted considered as an element of
the group ring $RG$ then $\un{x}-\ep(\un{x})$ is always a singular
element. However this element may have large rank. If this way of
ensuring the data to be transmitted is singular is used then multiple
vector design should be used. The data is broken as
$(\un{x_1},\un{x_2}, \ldots, \un{x_r})$. Then its augmentation is added to
each $\un{x_i}$ to get a vector $\un{y_i}=(\un{x_i},\ep(\un{x_i}))$ 
which is then used. 
So for example 
$(\un{y_1}A_1, \un{y_2}A_2, \ldots, \un{y_r}A_r)$ would be
transmitted. Each piece is singular and $r$ is large.

\subsection{Complete orthogonal sets of idempotents}\label{idemspots}

Here we consider properties of complete sets of idempotent matrices
and ranks of the idempotents. These are used to construct $X,Y$ such
that these have large kernels and linear combinations of which are
non-singular. 

Let $R$ be a ring with identity $1_R =1$.  A {\em complete family of
orthogonal idempotents} is a set $\{e_1, e_2, \ldots, e_k\}$ in $R$
such that \\ (i) $e_i \not = 0$ and $e_i^2 = e_i$, $1\leq i\leq k$;\\ (ii) If
$i\not = j$ then $e_ie_j = 0$; \\ (iii) $1 = e_1+e_2 + \ldots + e_k$. 

The idempotent $e_i$ is said to be {\em primitive} if it cannot be
written as $e_i= e_i^{'}+ e_i^{''}$ where $e_i^{'},e_i^{''}$ are idempotents
such that $e_i^{'}\neq 0,e_i^{''} \neq 0$ and $e_i^{'}e_i^{''}=0$. A 
set of idempotents is said to be {\em primitive} if each
idempotent in the set is primitive.

For example such sets always exist  in $FG$, the group ring over a
field $F$, when 
$char F \not | \, |G|$; these idempotent sets are
related to the representation theory of $FG$, see \cite{sehgal}.   
General methods for constructing such sets are derived in \cite{hur3} and the
reader is referred therein for details. 
The constructions in \cite{hur3} were derived 
in connection with  applications
 to multi-dimensional paraunitary matrices which are 
 used in the communications' areas. Specific examples of large sets
 and using modular arithmetic (working over $GF(p)$) and where
 convolution methods may be applied  are given in \cite{hur4} . 

For completeness 
 some of the basics are given below.

 

\begin{lemma}\label{trrank} Suppose $\{E_1,E_2, \ldots, E_s\}$ is a
set of orthogonal idempotent matrices. Then $\rank
(E_1+E_2 +\ldots + E_s) = \tr (E_1+E_2+ \ldots + E_s) = \tr E_1+ \tr
 E_2+ \ldots + \tr E_s = \rank E_1+ \rank E_2 + \ldots +\rank
E_s$.
\end{lemma}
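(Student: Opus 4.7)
The plan is to reduce the entire chain of equalities to two simple facts: (a) for any idempotent matrix $E$ one has $\rank E = \tr E$, and (b) the sum $E = E_1 + E_2 + \cdots + E_s$ of pairwise orthogonal idempotents is itself an idempotent. Granting these, the lemma follows by bookkeeping, since trace is linear.

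First I would establish (a). The minimal polynomial of an idempotent $E$ divides $x^2 - x = x(x-1)$, which has distinct roots in any field, so $E$ is diagonalisable with eigenvalues confined to $\{0,1\}$. The trace of $E$ is the sum of eigenvalues with multiplicity, which equals the number of $1$-eigenvalues, which in turn equals $\dim(\operatorname{im} E) = \rank E$. (Equivalently, one has the direct-sum decomposition of the ambient space as $\operatorname{im} E \oplus \ker E$, and $E$ acts as the identity on the first summand and as zero on the second, which makes $\tr E = \dim(\operatorname{im} E) = \rank E$ transparent.)

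Next I would verify (b) by direct computation: using $E_i^2 = E_i$ and $E_iE_j = 0$ for $i \neq j$,
\[
(E_1 + \cdots + E_s)^2 \;=\; \sum_{i} E_i^2 + \sum_{i\neq j} E_i E_j \;=\; \sum_i E_i,
\]
so $E_1 + \cdots + E_s$ is idempotent.

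Finally I assemble the four equalities. Applying (a) to the idempotent $E_1 + \cdots + E_s$ gives $\rank(E_1 + \cdots + E_s) = \tr(E_1 + \cdots + E_s)$, which is the first equality. The middle equality $\tr(E_1 + \cdots + E_s) = \tr E_1 + \cdots + \tr E_s$ is just linearity of the trace. The last equality $\tr E_1 + \cdots + \tr E_s = \rank E_1 + \cdots + \rank E_s$ is (a) applied to each $E_i$ individually. There is no real obstacle here; the only substantive content is the idempotent identity $\rank = \tr$, and the only thing to be careful about is invoking a field (or at least a setting in which eigenvalues and diagonalisation make sense), which is the standing assumption in the paper.
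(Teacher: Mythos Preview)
Your proof is correct and follows essentially the same approach as the paper: the paper also invokes the identity $\rank E = \tr E$ for idempotents (stated as known), observes that sums of orthogonal idempotents remain idempotent (phrased there as ``if $\{E,F,G\}$ is orthogonal then so is $\{E+F,G\}$''), and then reads off the chain of equalities via linearity of trace. Your version is simply more explicit, supplying the diagonalisation argument for $\rank = \tr$ and expanding $(E_1+\cdots+E_s)^2$ directly rather than combining pairwise.
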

\begin{proof}
It is known that $\rank A = \tr A$ for an idempotent matrix, 
and so
$\rank E_i = \tr E_i$ for each $i$. If $\{E,F,G\}$ is a set an orthogonal
 idempotent matrices so is  $\{E+F,G\}$. From this it follows that $\rank
(E_1+E_2 +\ldots + E_s) = \tr (E_1+E_2+ \ldots E_s)= \tr E_1+\tr E_2 +
 \ldots + \tr E_s = \rank E_1+ \rank E_2 + \ldots \rank
E_s$.
\end{proof}
\begin{corollary}\label{trrank1}
$\rank(E_{i_1}+ E_{i_2}+ \ldots + E_{i_k})= 
\rank E_{i_1} +\rank E_{i_2}+ \ldots + \rank E_{i_k}$ for $i_j \in \{
1,2,\ldots, s\}$, and $i_j\neq i_l$ for $j\neq l$.
\end{corollary}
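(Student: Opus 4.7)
The plan is to observe that any subset of an orthogonal set of idempotents is itself a set of orthogonal idempotents, so Lemma \ref{trrank} applies verbatim to the subset $\{E_{i_1}, E_{i_2}, \ldots, E_{i_k}\}$.

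More precisely, I would first note that the defining conditions of orthogonality used in Lemma \ref{trrank} are just (i) each $E_i$ is a nonzero idempotent and (ii) $E_iE_j=0$ for $i\neq j$ — no completeness assumption ($\sum E_i = 1$) was actually invoked in the proof of that lemma. Since the $i_j$ are distinct, the restricted family $\{E_{i_1},\ldots,E_{i_k}\}$ inherits both (i) and (ii) from the ambient family. Hence applying Lemma \ref{trrank} to this restricted family gives
\[
\rank(E_{i_1}+E_{i_2}+\cdots+E_{i_k}) = \rank E_{i_1}+\rank E_{i_2}+\cdots+\rank E_{i_k},
\]
which is the desired equality.

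There is no real obstacle here; the only point worth double-checking is that Lemma \ref{trrank} is stated for an arbitrary orthogonal set rather than a \emph{complete} orthogonal set. Assuming that reading (which matches the actual proof given, where completeness was never used), the corollary is immediate by restriction. If instead one wished to emphasize the trace identity, one could equivalently write
\[
\rank\!\Bigl(\sum_{j=1}^{k} E_{i_j}\Bigr) = \tr\!\Bigl(\sum_{j=1}^{k} E_{i_j}\Bigr) = \sum_{j=1}^{k}\tr E_{i_j} = \sum_{j=1}^{k}\rank E_{i_j},
\]
using additivity of trace and the identity $\rank E = \tr E$ for each idempotent $E_{i_j}$ separately; orthogonality then guarantees the first equality via the same argument as in Lemma \ref{trrank}.
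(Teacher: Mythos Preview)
Your proposal is correct and matches the paper's approach: the corollary is stated without proof immediately after Lemma~\ref{trrank}, the implicit argument being exactly that any subset of an orthogonal family of idempotents is again such a family, so the lemma applies directly. Your observation that completeness is never used is also accurate and is precisely why the restriction works.
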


Let $\{e_1, e_2, \ldots, e_k\}$ be a complete orthogonal set of idempotents
in a vector space over $F$. 


\begin{lemma}\label{gr1} Let $w= \al_1 e_1 + \al_2 e_2 + \ldots +
\al_ke_k$ with $\al_i \in F$. Then $w$
 is invertible if and only if each $\al_i \neq 0$ 
and in this case $w^{-1}
 = {\al_1}^{-1}e_1 + {\al_2}^{-1}e_2+ \ldots + {\al_k}^{-1}e_k$.
\end{lemma}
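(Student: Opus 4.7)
The plan is to prove both directions by direct computation using the defining properties of a complete orthogonal set of idempotents.

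For the \emph{if} direction, suppose each $\alpha_i\neq 0$ and set $v = \alpha_1^{-1}e_1 + \alpha_2^{-1}e_2 + \ldots + \alpha_k^{-1}e_k$. First I would compute $wv$ by expanding the double sum $\sum_{i,j}\alpha_i\alpha_j^{-1}e_ie_j$; orthogonality $e_ie_j=0$ for $i\neq j$ kills every cross term, and idempotency $e_i^2=e_i$ turns each diagonal term into $e_i$, leaving $wv = \sum_i e_i = 1$ by completeness. The same calculation shows $vw=1$, so $v=w^{-1}$ and the explicit formula falls out.

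For the \emph{only if} direction, suppose conversely that some $\alpha_j = 0$. Multiplying $w$ on the right by $e_j$ and again using orthogonality and idempotency yields $we_j = \alpha_j e_j = 0$. Since $e_j \neq 0$ by definition of a complete orthogonal set of idempotents, $w$ is a (right) zero divisor. If $w$ were invertible, then applying $w^{-1}$ on the left of $we_j=0$ would give $e_j=0$, a contradiction. Hence $w$ is not invertible, which is the contrapositive of the desired implication.

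There is no real obstacle here: every step is a one-line manipulation using only properties (i)--(iii) of the defining set. The only point that requires care is to record explicitly where each property is used (orthogonality for killing cross terms, idempotency for simplifying diagonal terms, completeness for recognising $\sum e_i = 1$), so the reader sees that the formula $w^{-1}=\sum \alpha_i^{-1}e_i$ is forced and not just verified after the fact.
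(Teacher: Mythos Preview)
Your proposal is correct and follows essentially the same route as the paper: the paper also verifies directly that $w\cdot\sum_i\alpha_i^{-1}e_i=\sum_i e_i^2=\sum_i e_i=1$ for the \emph{if} direction, and for the \emph{only if} direction observes that if some $\alpha_j=0$ then $we_j=0$ with $e_j\neq 0$, so $w$ is a zero-divisor and hence not invertible. Your write-up is slightly more explicit (you also note $vw=1$, which is technically needed in a noncommutative ring), but there is no substantive difference.
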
 

\begin{proof} Suppose each $\al_i \neq 0$.
Then $w({\al_0}^{-1}e_0+{\al_1}^{-1}e_1+ \ldots + {\al_k}^{-1}e_k)
 = e_0^2 + e_1^2 + \ldots + e_k^2 = e_0+e_1+\ldots + e_k = 1$.

Suppose on the other hand $w$ is invertible and that some $\al_i=0$. 
Then $we_i =0$ and so $w$ is a (non-zero) zero-divisor and is not invertible.
\end{proof}


Now specialise the $e_i$ to be $n\ti n$ matrices and in this case
use capital letters and let $e_i = E_i$.
\begin{lemma}\label{invert} Let $\{E_1,E_2,\ldots, E_k\}$ be a complete orthogonal set
of idempotents in $F_{n\ti n}$ and define  
$A= a_1 E_1 + a_2 E_2 + \ldots + a_kE_k$. Then $A$
 is invertible if and only if each $a_i \neq 0$ and in this case $A^{-1}
 = {a_1}^{-1}E_1 + {a_2}^{-1}E_2+ \ldots + {a_k}^{-1}E_k$. 
\end{lemma}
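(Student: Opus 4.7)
The plan is to reduce this directly to the preceding Lemma \ref{gr1}, since $F_{n\ti n}$ is a ring with identity in which $\{E_1,\ldots,E_k\}$ is a complete orthogonal set of idempotents, and the hypotheses of Lemma \ref{gr1} were stated for exactly such a setting. Thus the proof should be at most a sentence or two, simply invoking that earlier result with $R = F_{n\ti n}$.

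For completeness I would also sketch the direct verification. First I would show the ``if'' direction: assuming each $a_i \neq 0$, define $B = a_1^{-1}E_1 + a_2^{-1}E_2 + \ldots + a_k^{-1}E_k$ and compute
\[ AB \;=\; \Bigl(\sum_i a_i E_i\Bigr)\Bigl(\sum_j a_j^{-1}E_j\Bigr) \;=\; \sum_{i,j} a_i a_j^{-1} E_iE_j. \]
Using orthogonality $E_iE_j=0$ for $i\neq j$ and idempotence $E_i^2=E_i$, the double sum collapses to $\sum_i E_i$, which equals the identity matrix $I$ by condition (iii) of a complete orthogonal set. The same calculation shows $BA=I$, so $B = A^{-1}$ with the claimed form.

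For the ``only if'' direction I would argue contrapositively: suppose some $a_i = 0$. Then $AE_i = \sum_j a_j E_j E_i = a_i E_i^2 = 0$, while $E_i \neq 0$ by hypothesis (i). Hence $A$ is a (nonzero) zero-divisor in $F_{n\ti n}$ and therefore not invertible.

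There is no real obstacle here; the computation is routine and hinges only on the three defining properties of a complete orthogonal set of idempotents plus the observation that each $E_i$ is nonzero. The only small point worth flagging explicitly is the appeal to property (iii), $\sum_i E_i = I$, which is what promotes the left and right products $AB$ and $BA$ from being equal to $\sum_i E_i$ to being equal to the identity.
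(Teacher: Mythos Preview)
Your proposal is correct and matches the paper's approach exactly: the paper states Lemma~\ref{invert} immediately after Lemma~\ref{gr1} with the sentence ``Now specialise the $e_i$ to be $n\times n$ matrices\ldots'' and gives no separate proof, treating it as a direct specialisation. Your explicit verification also mirrors the paper's proof of Lemma~\ref{gr1} line for line (compute the product, collapse via orthogonality and idempotence to $\sum_i E_i = I$; for the converse, observe $AE_i=0$ makes $A$ a zero-divisor).
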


The reader may consult \cite{hur3} for a proof of the following.
\begin{proposition}{\label{det}} Suppose $\{E_1, E_2, \ldots, E_k\}$ is a
 complete symmetric orthogonal set of idempotents in $F_{n\ti n}$. 
Let $A= a_1 E_1 + a_2 E_2 + \ldots +
 a_kE_k$ with $a_i\in F$.  Then the determinant of $A$ is 
$|A| = a_1^{\rank E_1}a_2^{\rank E_2}\ldots a_k^{\rank E_k}$.
\end{proposition}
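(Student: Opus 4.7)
The plan is to diagonalize $A$ by exploiting the decomposition of $F^n$ induced by the idempotents. Because $\{E_1,\ldots,E_k\}$ is a complete orthogonal set, every $v\in F^n$ can be written as $v = Iv = \sum_{i=1}^k E_i v$, with $E_i v\in \mathrm{Im}(E_i)$. To see the sum $F^n = \bigoplus_{i=1}^k \mathrm{Im}(E_i)$ is direct, suppose $\sum v_i = 0$ with $v_i = E_i w_i$. Applying $E_j$ and using $E_j E_i = 0$ for $i\neq j$ together with $E_j^2 = E_j$ yields $v_j = E_j v_j = 0$. A dimension count against Lemma \ref{trrank} (which gives $\sum_i \rank E_i = \rank I = n$) confirms this.

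Next I would compute the action of $A$ on each summand. For $v = E_i w \in \mathrm{Im}(E_i)$,
\[
Av \;=\; \sum_{j=1}^k a_j E_j E_i w \;=\; a_i E_i^2 w \;=\; a_i E_i w \;=\; a_i v,
\]
so $A$ restricts to multiplication by the scalar $a_i$ on $\mathrm{Im}(E_i)$. Now choose a basis $B_i$ of $\mathrm{Im}(E_i)$ of size $\rank E_i$ for each $i$, and concatenate them in order to form a basis $B = B_1\cup\cdots\cup B_k$ of $F^n$. In this basis the matrix of $A$ is block diagonal with blocks $a_i\, I_{\rank E_i}$, and therefore
\[
\det A \;=\; \prod_{i=1}^k \det\bigl(a_i I_{\rank E_i}\bigr) \;=\; \prod_{i=1}^k a_i^{\rank E_i},
\]
as claimed. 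Since the determinant is independent of the choice of basis, this equality holds for the original $A$ in $F_{n\times n}$.

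There is no real obstacle here; the only subtlety is justifying the directness of the sum and the dimension matching, both of which are handled by the orthogonality relations $E_iE_j=\delta_{ij}E_i$ and by Lemma \ref{trrank}. The hypothesis that the set be symmetric is not strictly needed for the determinant formula itself—it simply guarantees the favourable structural properties (such as the $E_i$ being simultaneously diagonalisable by an orthogonal change of basis) that are exploited elsewhere in \cite{hur3}. The argument above uses only the three defining properties of a complete orthogonal idempotent family.
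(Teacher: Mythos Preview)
Your argument is correct: the direct-sum decomposition $F^n=\bigoplus_i \mathrm{Im}(E_i)$ follows from orthogonality and completeness exactly as you show, and on each summand $A$ acts as the scalar $a_i$, so in an adapted basis $A$ is block diagonal with blocks $a_iI_{\rank E_i}$ and the determinant formula drops out. Note that you do not actually need Lemma~\ref{trrank} for the dimension count---once directness is established, $\sum_i\dim\mathrm{Im}(E_i)=n$ is automatic---which is just as well, since the trace-equals-rank statement for idempotents can fail in positive characteristic while your argument does not.

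As for comparison: the paper does not prove Proposition~\ref{det} in-line at all; it simply refers the reader to \cite{hur3}. Your eigenspace/block-diagonalisation argument is the standard route and is almost certainly what is done there (the ``symmetric'' hypothesis, as you observe, plays no role in the determinant computation itself).
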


\begin{lemma}\label{idemspot} Let $\{E_0,E_1,E_2,\ldots, E_{n-1}\}$ be a complete
  orthogonal set of idempotents in $F_{n\ti n}$ where each $E_i$ has
  $\rank 1$. Let $I=\{0,1,\ldots, n-1\}$ and $J\subset I$. Define
  $X=\di\sum_{j\in J}\al_jE_j$ with $\al_j\neq 0$. Then $\rank X =
  |J|$.
\end{lemma}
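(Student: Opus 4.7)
The plan is to reduce the computation of $\rank X$ to an application of Corollary \ref{trrank1} by sandwiching $X$ between two matrices that convert it into an idempotent projector of known rank.

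First I would introduce the auxiliary matrix $F = \sum_{j \in J} E_j$. Since $\{E_j : j \in J\}$ is a subset of a complete orthogonal family of idempotents, expanding $F^2$ and using $E_iE_j = 0$ for $i \neq j$ together with $E_j^2 = E_j$ shows $F$ is itself idempotent. Corollary \ref{trrank1} then gives $\rank F = \sum_{j \in J} \rank E_j = |J|$, since each $E_j$ has rank one by hypothesis.

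Next I would connect $X$ to $F$ by two short multiplicative identities, both of which follow immediately from the orthogonality relations. On the one hand, $XF = \bigl(\sum_{j \in J}\alpha_jE_j\bigr)\bigl(\sum_{k \in J}E_k\bigr) = \sum_{j \in J}\alpha_j E_j = X$, so $\rank X = \rank(XF) \leq \rank F$. On the other hand, since $\alpha_j \neq 0$ for every $j \in J$, one can form $Y = \sum_{j \in J}\alpha_j^{-1}E_j$, and the same orthogonality calculation yields $XY = \sum_{j \in J}\alpha_j\alpha_j^{-1}E_j = F$, so $\rank F = \rank(XY) \leq \rank X$. Combining the two inequalities gives $\rank X = \rank F = |J|$.

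There is no real obstacle here: the whole argument is just the observation that on the subspace corresponding to $\bigoplus_{j \in J}\operatorname{Im} E_j$ the matrix $X$ acts as a nonsingular diagonal (with invertible entries $\alpha_j$), and is zero elsewhere; the formal version of this is exactly the sandwich $XF = X$, $XY = F$ above. The only verification step is the routine orthogonality computation, which is already packaged in the hypothesis that $\{E_0,\ldots,E_{n-1}\}$ is a complete orthogonal set of idempotents.
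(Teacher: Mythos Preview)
Your proof is correct and takes a genuinely different route from the paper's argument. The paper works with the \emph{complementary} idempotent $\sum_{j\in I\setminus J}E_j$: it extends $X$ to the invertible element $W=X+\sum_{j\in I\setminus J}E_j$ (invoking Lemma~\ref{invert}) and uses subadditivity $\rank W\le \rank X+\rank\sum_{j\in I\setminus J}E_j$ together with Corollary~\ref{trrank1} to get $\rank X\ge |J|$; for the reverse inequality it applies Sylvester's inequality $\rank(AB)\ge \rank A+\rank B-n$ to the product $X\cdot\sum_{j\in I\setminus J}E_j=0$. You instead stay inside $J$: you compute $\rank F=|J|$ via Corollary~\ref{trrank1}, then sandwich $X$ between $F$ and the explicit inverse-on-the-range $Y=\sum_{j\in J}\alpha_j^{-1}E_j$ using only $\rank(AB)\le\min(\rank A,\rank B)$. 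Your argument is the more elementary of the two --- it avoids both Lemma~\ref{invert} and the Sylvester inequality --- and it makes the underlying reason (that $X$ restricts to an invertible operator on $\operatorname{Im} F$) transparent. The paper's version has the minor advantage that the construction of $W$ is exactly what is reused in the public-key scheme, so it ties the lemma to its application.
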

\begin{proof}
Let $W=X + \di\sum_{j\in (I-J)}E_j$. Then by Lemma \ref{invert} $W$ is invertible
and so has $\rank n$. Hence
$n=\rank(W)=\rank(X+\di\sum_{j\in (I-J)}E_j) \leq \rank X + \rank
\sum_{j\in (I-J)}E_j = \rank X + (n-|J|)$, by Corollary \ref{trrank1}. Therefore $\rank X \geq |J|$. 
From the rank inequality $\rank (AB)\geq \rank A+\rank B -n$, 
 get $0\geq \rank X + \rank
\di\sum_{j\in (I-J)}E_j-n= \rank X + n-|J|-n$ and hence $|J| \geq
\rank X$. Thus $\rank X = |J|$ 
\end{proof}

The following Lemma may be proved similarly. 
\begin{lemma}\label{idemspot1} Let $\{E_0,E_1,E_2,\ldots, E_{k}\}$ be a complete
  orthogonal set of idempotents in $F_{n\ti n}$ where $\rank
  E_i=r_i$. Let $I=\{0,1,\ldots, k\}$ and $J\subset I$. Define
  $X=\di\sum_{j\in J}\al_jE_j$ with $\al_j\neq 0$. Then $\rank X =
  \sum_{j\in J}\rank E_j$.
\end{lemma}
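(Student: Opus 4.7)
The plan is to imitate the proof of Lemma \ref{idemspot} almost verbatim, replacing $|J|$ (which there counted rank-$1$ idempotents) by the more general quantity $\sum_{j \in J} \rank E_j$, and correspondingly $n - |J|$ by $\sum_{j \in I-J} \rank E_j$. The identity that makes this substitution legal is Corollary \ref{trrank1}, which gives $\sum_{i \in I} \rank E_i = \rank\bigl(\sum_{i \in I} E_i\bigr) = \rank I_n = n$ and more generally $\rank \sum_{j \in I-J} E_j = \sum_{j \in I-J} \rank E_j$.

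First I would form the auxiliary matrix $W = X + \sum_{j \in I-J} E_j = \sum_{j \in J} \al_j E_j + \sum_{j \in I-J} 1 \cdot E_j$. Every $E_i$ appears in this expression with a non-zero scalar, so Lemma \ref{invert} makes $W$ invertible and hence $\rank W = n$. Sub-additivity of rank then yields
\[
n = \rank W \leq \rank X + \rank \sum_{j \in I-J} E_j = \rank X + \sum_{j \in I-J} \rank E_j,
\]
which gives the lower bound $\rank X \geq n - \sum_{j \in I-J} \rank E_j = \sum_{j \in J} \rank E_j$.

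For the matching upper bound, orthogonality $E_iE_j = 0$ for $i \neq j$ gives $X \cdot \sum_{j \in I-J} E_j = 0$. Applying the standard inequality $\rank(AB) \geq \rank A + \rank B - n$ to this zero product yields
\[
0 \geq \rank X + \sum_{j \in I-J} \rank E_j - n,
\]
so $\rank X \leq n - \sum_{j \in I-J} \rank E_j = \sum_{j \in J} \rank E_j$. Combining the two bounds gives the desired equality.

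There is no real obstacle here. The individual ranks $r_i$ enter the argument only through Corollary \ref{trrank1}, via the totals $\sum_{j \in J} \rank E_j$ and $\sum_{j \in I-J} \rank E_j$ summing to $n$; the orthogonality relations $E_iE_j = 0$ for $i \neq j$ that drive both the invertibility of $W$ and the vanishing $X \cdot \sum_{j \in I-J} E_j = 0$ are part of the definition of a complete orthogonal set and are insensitive to the individual ranks, so the rank-$1$ hypothesis of Lemma \ref{idemspot} was never essential.
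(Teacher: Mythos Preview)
Your proof is correct and follows exactly the route the paper intends: the paper simply says this lemma ``may be proved similarly'' to Lemma \ref{idemspot}, and your argument is precisely that generalisation, with $|J|$ replaced by $\sum_{j\in J}\rank E_j$ via Corollary \ref{trrank1}. If anything, you are slightly more explicit than the paper's proof of Lemma \ref{idemspot} in spelling out why $X\cdot\sum_{j\in I-J}E_j=0$ before invoking the rank inequality.
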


This enables the construction of public keys as follows.
 
Let $\{E_0,E_1,\ldots,E_{n-1}\}$ be a complete orthogonal set of idempotents in
$F_{n\ti n}$ and $I=\{0,1,\ldots,n-1\}$. 
\begin{enumerate} 
\item A chooses $J\subset I$ with $|J|$ approximately half of $|I|=n$ and constructs
  $X=\di\sum_{j\in J}\al_jE_j, \, Y=\di\sum_{j\in (I-J)}\be_jE_J$ with
  $\al_j\neq 0,\be_j\neq 0$. It is enough to choose $J$ so that both
  $X,Y$ have large kernel. 
\item A chooses $\{A_1,A_2\}$ non-singular and calculates $\{XA_1,YA_2\}$.
\item A has public key $(XA_1,YA_2)$ and private key $(X,Y,A_1,A_2)$.
\end{enumerate}

More generally, proceed as follows to construct a public key for A.

Let $\{E_0,E_1,\ldots,E_{k}\}$ be a complete orthogonal set of idempotents in
$F_{n\ti n}$ and $I=\{0,1,\ldots,k\}$. 
\begin{enumerate} 
\item A chooses $J\subset I$ and constructs
  $X=\di\sum_{j\in J}\al_jE_j, \, Y=\di\sum_{j\in (I-J)}\be_jE_j$ with
  $\al_j\neq 0,\be_j\neq 0$. Then $\rank X = \sum_{j\in J}\rank E_j$
  and $\rank Y=\sum_{j\in(I-J)}E_J=n-\rank X$ 
and $J$ needs to be  chosen so that both
  $X,Y$ have large kernel. 
\item A chooses $\{A_1,A_2\}$ non-singular and calculates $\{XA_1,YA_2\}$.
\item A has public key $(XA_1,YA_2)$ and private key $(X,Y,A_1,A_2)$.
\end{enumerate}

When B wishes to communicate $\un{z}$ to A, the process is  as
follows.
\begin{enumerate}
\item B transmits $\un{z}XA_1,\un{z}YA_2$.
\item A works out $\un{z}X,\un{z}Y$ and then $\un{z}(X+Y)$. 
\item Now $X+Y$ is invertible by Lemma \ref{invert} and easy to
  calculate and A works out $\un{z}$.
\end{enumerate} 

 For each $n$ there are many different complete orthogonal sets of
idempotents in $F_{n\ti n}$. It is not necessary that the particular
set used by A in constructing her public key be known to the world so
in fact an additional step (before 1.\ ) in constructing public key could be: 
\begin{itemize}
\item[0.] A chooses a complete orthogonal set of idempotents $\{E_0,
  E_1, \ldots, E_{k}\}$ in $F_{n\ti n}$. 
\end{itemize}

Schemes where $X,Y$ obtained  from orthogonal sets of idempotents as
above  are `protected' on both sides, as explained in section \ref{public}, may
also be implemented; details are omitted.

\subsection{Convolution}

Let $\un{z}*\un{w}$ denote the (circulant) convolution of $\un{z}$ and $\un{w}$.
Let  $A$ be a circulant matrix with first row $\un{a}$. 
\begin{lemma}  $\un{x}A = \un{x}*\un{a}$.
\end{lemma}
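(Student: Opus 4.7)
The plan is to unwind the two sides of the claimed identity entry by entry and match them by a change of summation index. This is a purely mechanical verification once the indexing is fixed, so I would begin by pinning down conventions.

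First I would fix the cyclic listing $\{g^0,g^1,\ldots,g^{n-1}\}$ of $C_n$ used to form circulant matrices as $RC_n$-matrices. Writing $\un{a}=(a_0,a_1,\ldots,a_{n-1})$, the formula $A_{ij}=\alpha_{g_i^{-1}g_j}$ from the construction in section \ref{results} specialises to $A_{ij}=a_{(j-i)\bmod n}$, which is the standard circulant shape. I would then record the definition of circulant convolution in the form $(\un{x}*\un{a})_k=\sum_{i=0}^{n-1}x_i\,a_{(k-i)\bmod n}$.

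Next I would compute the $k$-th entry of the vector-matrix product directly: $(\un{x}A)_k=\sum_{i=0}^{n-1}x_i A_{ik}=\sum_{i=0}^{n-1}x_i a_{(k-i)\bmod n}$, which is visibly the same expression as $(\un{x}*\un{a})_k$. Since this holds for every $k\in\{0,1,\ldots,n-1\}$, the two vectors agree, and the lemma follows.

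There is really no substantive obstacle here; the only care needed is to ensure the circulant convention (whether the rows shift right or left, and whether one uses $\un{a}$ or the reversed vector) matches the convention used elsewhere in the paper, so that the subscript $(j-i)\bmod n$ appearing in $A_{ij}$ matches the subscript $(k-i)\bmod n$ appearing in the convolution. Once the listing $g_i=g^i$ and the formula $A_{ij}=\alpha_{g_i^{-1}g_j}$ from section \ref{results} are applied, both reductions give the same shift, so the identification is immediate.
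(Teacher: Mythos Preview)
Your proof is correct: the entry-by-entry computation you outline does verify the identity, and the care you take with indexing conventions is appropriate.

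However, the paper's own argument proceeds differently. Rather than computing coordinates, it invokes the completion machinery set up earlier: form the circulant completion $X$ of $\un{x}$, observe that $XA$ is again circulant, and note that its first row equals both $\un{x}A$ (first row of a matrix product) and $\un{x}*\un{a}$ (since $XA$ is the $RC_n$-matrix of the group-ring product $xa$, whose coefficient vector is by definition the convolution). Your approach is more elementary and self-contained, requiring nothing beyond the explicit formula $A_{ij}=a_{(j-i)\bmod n}$; the paper's approach is shorter and ties the lemma into the group-ring/matrix-ring isomorphism that underlies the whole paper, at the cost of relying on that isomorphism having already been internalised. Both are valid; yours would be the natural choice if one wanted a proof independent of the $RG$-matrix framework.
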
 
\begin{proof} Let $X$ be the completion of $\un{x}$. Then $XA$ is a
  circulant matrix whose first row is $\un{x}A$ and is also
  $\un{x}*\un{a}$.
\end{proof}

More general $G$ convolutions may be defined as follows. 
Let $\un{x} \in R^n, \un{y} \in R^n$ and $G$ a finite group of order $n$.
 Define the $G$-convolution of $\un{x}$ and $\un{y}$, denoted
 $\un{x}*_G\un{y}$, as follows. 
Suppose $\un{x}=(\al_0,\al_1,\ldots,\al_{n-1}),
\un{y}=(\be_0,\be_1,\ldots, \be_{n-1})$ with $\al_i,\be_i \in R$ and
$G=\{g_0,g_1,\ldots,g_{n-1}\}$. Let $x=\al_0g_0+\al_1g_1+\ldots +
\al_{n-1}g_{n-1},y=\be_0g_0+\be_1g_1+\ldots + \be_{n-1}g_{n-1}$. Then
$x\in RG, y\in RG$ and $xy=\ga_0g_0+\ga_1g_1+\ldots +
\ga_{n-1}g_{n-1}$ for some $\ga_i\in R$.
Define $\un{x}*_G\un{y}= (\ga_0,\ga_1,\ldots, \ga_{n-1})$.

\begin{lemma} Let $A$ be an $RG$-matrix with first row $\un{a}$ 
and $\un{x} \in R^n$. Then
  $\un{x}A =\un{x}*_G\un{a}$.
\end{lemma}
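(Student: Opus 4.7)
The plan is to mirror the proof of the circulant case given immediately before, using the ring isomorphism $\sigma: RG \to R_{n\ti n}$ between the group ring and the $RG$-matrices that was stated earlier in the excerpt. Let $x = \sum_{i=0}^{n-1}\al_i g_i$ and $a = \sum_{i=0}^{n-1}\ga_i g_i$ be the group ring elements whose coefficient vectors (in the fixed listing of $G$) are $\un{x}$ and $\un{a}$ respectively, so that the completion $X$ of $\un{x}$ equals $\sigma(x)$ and $A = \sigma(a)$.

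First I would compute the first row of $XA$ in two different ways. On the one hand, since $\sigma$ is a ring homomorphism, $XA = \sigma(x)\sigma(a) = \sigma(xa)$, which is itself an $RG$-matrix. By the defining construction of an $RG$-matrix (and as highlighted after the $M(RG,w)$ definition), its first row is precisely the coefficient vector of the underlying group ring element. Hence the first row of $XA$ is the coefficient vector of the product $xa \in RG$, and by the definition of $*_G$ this coefficient vector is exactly $\un{x}*_G\un{a}$.

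On the other hand, viewing $XA$ simply as a matrix product, its first row is obtained by multiplying the first row of $X$ (which is $\un{x}$) by $A$, giving $\un{x}A$. Equating the two expressions for the first row of $XA$ yields $\un{x}A = \un{x}*_G\un{a}$, as required.

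There is no real obstacle: the statement is essentially a repackaging of the ring homomorphism theorem already recorded in Section~\ref{results}, together with the elementary observation (Lemma~\ref{rg}) that an $RG$-matrix is determined by, and reads off, the coefficient vector of its group ring element in the first row. The only point that requires a small amount of care is keeping the listing of $G$ fixed throughout, so that the completions $X$, $A$ and the convolution $*_G$ are all defined relative to the same indexing of the group elements.
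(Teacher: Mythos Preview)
Your proof is correct and follows essentially the same approach as the paper: both form the completion $X$ of $\un{x}$, observe that $XA$ is an $RG$-matrix, and read off its first row in two ways to obtain $\un{x}A = \un{x}*_G\un{a}$. Your version simply spells out in more detail, via the homomorphism $\sigma$, why the first row of $XA$ equals $\un{x}*_G\un{a}$, whereas the paper states this in a single sentence.
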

\begin{proof} Let $X$ be the completion of $\un{x}$ in $RG$. Then $XA$
  is an $RG$-matrix whose first row is both $\un{x}A$ and
  $\un{x}*_G\un{a}$.
\end{proof}

When $G$ is the cyclic group generated by $g$, 
with listing $\{1,g,g^2,\ldots, g^{n-1}\}$, then $\un{z}*_G\un{w}$ is the
normal (circulant) convolution. 
Calculations in the cyclic group ring and with circulant matrices may be
performed in $O(n\log n)$ time using a fast Fourier transform (FFT)
and FTs allow an effective parallel implementation. 



The encryption methods of the previous sections 
which involve
multiplying vectors and matrices 
can be done in
$O(n\log n)$ time when the matrices have a structure such as the
structure of certain group ring matrices.

\subsection{Coding aspects theory}\label{codingas} Suppose data $x_1$ of size $1\ti r$ is to
be transmitted. Encode $x_1$ by $x_1G=\un{x}$ where $G$ is $r\ti n$
generator matrix of an $(n,r)$ code with $G$ of $\rank r$. 
If $G$ is an $n\ti n$ circulant matrix of $\rank r$ then the first $r$
rows of $G$ are linearly independent; this follows from for the following: 

\begin{lemma} Let $G_1$ be a circulant $n\ti n$ matrix of $\rank r$ and
  suppose $G$ consists of the first $r$ rows of $G_1$. Let $x=x_1G$
  where $x_1$ is a vector of size $1\ti r$ and let $X$ be the
  completion of $x$. Then $\rank X \leq r$.
\end{lemma}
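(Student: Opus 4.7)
The plan is to reduce the claim to a single application of the submultiplicativity of rank by viewing $x = x_1 G$ as a product of a full length-$n$ vector with the whole circulant $G_1$.

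First I would pad $x_1$ with $n-r$ trailing zeros to obtain $\widetilde{x}_1 \in R^n$. Since the last $n-r$ rows of $G_1$ are weighted by zero in the matrix product, we get
\[
\widetilde{x}_1 G_1 \;=\; x_1 G \;=\; x.
\]
So $x$ can be recovered as a product of a length-$n$ vector against the full circulant, not only against its first $r$ rows.

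Next, let $\widetilde{X}_1$ be the circulant completion of $\widetilde{x}_1$. Since $G_1$ is a circulant, i.e.\ an $RG$-matrix for $G$ cyclic, the discussion immediately following Lemma \ref{rg} applies: the circulant completion of $\widetilde{x}_1 G_1$ is exactly $\widetilde{X}_1 G_1$. Combined with $x = \widetilde{x}_1 G_1$, this gives
\[
X \;=\; \widetilde{X}_1 G_1.
\]

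Finally, apply the standard inequality $\rank(AB) \leq \min\{\rank A, \rank B\}$ to conclude
\[
\rank X \;=\; \rank(\widetilde{X}_1 G_1) \;\leq\; \rank G_1 \;=\; r,
\]
which is the claim. The only point that needs any care is the second step, namely justifying that the circulant completion of a product $\un{v}G_1$ is $V G_1$ (with $V$ the completion of $\un{v}$); this is precisely the content of Lemma \ref{rg} applied in the cyclic group ring, so no new work is required. The rest is a one-line rank inequality.
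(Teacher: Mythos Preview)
Your proof is correct and is essentially identical to the paper's own argument: pad $x_1$ with $n-r$ zeros, write $x = (\al_1,\ldots,\al_r,0,\ldots,0)G_1$, pass to completions to get $X = \Gamma G_1$, and then bound $\rank X \leq \rank G_1 = r$. You even make explicit the appeal to Lemma~\ref{rg} for the completion-of-a-product step, which the paper leaves implicit.
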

\begin{proof} Let $x_1=(\al_1,\al_2,\ldots,\al_r)$. Then
  $x=x_1G=(\al_1,\al_2,\ldots,\al_r,0,0,\ldots, 0)G_1$ where
  there are $(n-r)$ zeros. Then
  $X=\Gamma G_1$ where $\Gamma$ is the completion of
  $(\al_1,\al_2,\ldots, \al_r,0,0,\ldots,0)$. Hence $\rank X \leq
  \rank G_1 =r$ as required.
\end{proof}
\begin{lemma} Let $G$ be the generator $r\ti n$ 
matrix of a cyclic zero-divisor $(n,r)$ code and $x=x_1G$ where $x_1$
has size $1\ti r$. Then the completion $X$ of $x$ has $\rank \leq r$.
\end{lemma}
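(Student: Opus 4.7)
The plan is to reduce this to the preceding lemma by extending the $r \times n$ generator matrix $G$ of the cyclic zero-divisor $(n,r)$ code to a full $n \times n$ circulant matrix $G_1$ of rank $r$, and then padding $x_1$ with zeros. Since the code is a cyclic zero-divisor code, by the construction in \cite{hur1} the generator matrix $G$ consists of the first $r$ rows of a circulant $n \times n$ matrix $G_1$ (namely, the completion of the generating zero-divisor element of $RC_n$), and this $G_1$ has rank exactly $r$. This is the key structural input I would invoke; without it, the reduction does not get off the ground.

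Next, write $x_1 = (\alpha_1, \alpha_2, \ldots, \alpha_r)$ and let $\un{v} = (\alpha_1, \alpha_2, \ldots, \alpha_r, 0, 0, \ldots, 0) \in R^n$. Because only the first $r$ rows of $G_1$ carry nonzero weight when multiplied by $\un{v}$, one has $x = x_1 G = \un{v}\,G_1$. Let $\Gamma$ be the completion of $\un{v}$ in the same cyclic group ring used to define $G_1$. By Lemma \ref{rg}, the completion of $\un{v}\,G_1$ is $\Gamma G_1$, since $\Gamma G_1$ is a circulant whose first row coincides with $\un{v}\,G_1 = x$. Hence $X = \Gamma G_1$.

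Finally, apply the rank inequality $\rank(\Gamma G_1) \leq \min\{\rank \Gamma, \rank G_1\} \leq \rank G_1 = r$ to conclude $\rank X \leq r$, as required. The argument is essentially a transcription of the circulant-case lemma immediately preceding it; the only substantive obstacle is justifying that for a cyclic zero-divisor $(n,r)$ code the generator may be taken as the first $r$ rows of an $n \times n$ circulant of rank $r$. That is precisely the content of the zero-divisor cyclic code construction of \cite{hur1}, so I would cite it and keep the proof short.
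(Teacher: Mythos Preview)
Your proof is correct and follows essentially the same route as the paper: write $x=\un{\gamma}\,\hat{G}$ for a rank-$r$ circulant $\hat{G}$, complete to $X=\Gamma\hat{G}$, and bound $\rank X\leq\rank\hat{G}=r$. The only difference is that the paper does not assume the rows of $G$ are literally the first $r$ rows of $\hat{G}$; it first expresses each row of $G$ as a linear combination of those first $r$ rows (which span the row space since $\rank\hat{G}=r$, citing \cite{hur1}), and thereby arrives at the same padded-vector identity with coefficients $\gamma_i$ that need not equal the original $\alpha_i$.
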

\begin{proof} Let the rows of   $G$ be denoted by
$\{\hat{v}_1, \hat{v}_2, \ldots, \hat{v}_r\}$. Then
$x_1G=\di\sum_{i=1}^r\al_i\hat{v}_i$. 
Let $\hat{G}$ be the circulant matrix from which $G$ is derived and
let the rows of this be denoted by $v_1,v_2,\ldots, v_n$. The first
$r$ rows of $\hat{G}$ are linearly independent, see \cite{hur1}, and thus
$\hat{v}_i=\di\sum_{i=1}^r\be_iv_i$ for some $\be_i$. Hence
$x=x_1G=\sum\ga_iv_i$ for some $\ga_i$. 
Thus $x=x_1G=(\ga_1,\ga_2,\ldots, \ga_r, 0,0,\ldots, 0)\hat{G}$ where
$\un{\ga}=(\ga_1,\ga_2,\ldots,\ga_r,0,0,\ldots, 0)$ has length $n$.
Hence $X=\Gamma \hat{G}$ where $X$ is the completion of $x$ and
$\Gamma$ is the completion of $\un{\ga}$. As $\rank \hat{G}=r$ this
implies that $\rank X \leq r$. 
\end{proof}  

More generally we obtain the following result.
\begin{proposition}\label{prop7} Let $G$ be a rank $r$ generator $r\ti n$ matrix of a
zero-divisor $(n,r)$ code obtained from a group ring $n\ti n$ matrix
$\hat{G}$ of $\rank r$.
Then the completion of $x_1G$ in this group ring has $\rank \leq r$.
\end{proposition}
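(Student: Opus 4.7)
The plan is to generalize the argument used in the previous lemma (for the cyclic case) to arbitrary group ring matrices, using only the two ingredients: (a) the rows of $G$ lie in the row space of $\hat{G}$, and (b) the observation following Lemma \ref{rg} that the completion of $\underline{v}\hat{G}$ in the group ring equals $V\hat{G}$, where $V$ is the completion of $\underline{v}$.

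First I would unpack the hypothesis. By definition, $G$ is an $r\times n$ generator matrix for the zero-divisor code generated by $\hat{G}$, and $\hat{G}$ has rank $r$. Since $G$ has $r$ rows and generates the same row space as $\hat{G}$, the rows of $G$ form a basis for the row space of $\hat{G}$. In particular, every row of $G$ is a linear combination of rows of $\hat{G}$, so there exists an $r\times n$ matrix $P$ with $G = P\hat{G}$.

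Next, I would use this to factor $x = x_1 G$ through $\hat{G}$. Writing $\underline{\gamma} = x_1 P \in R^n$, we have
\[
x = x_1 G = x_1 P \hat{G} = \underline{\gamma}\,\hat{G}.
\]
Now I apply the observation following Lemma \ref{rg}: since $\hat{G}$ is an $RG$-matrix and $\underline{\gamma}\in R^n$, the completion of $\underline{\gamma}\hat{G}$ in this group ring is $\Gamma \hat{G}$, where $\Gamma$ is the completion of $\underline{\gamma}$. Therefore the completion $X$ of $x = x_1 G$ satisfies $X = \Gamma\hat{G}$.

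Finally, the standard matrix rank bound yields
\[
\rank X = \rank(\Gamma\hat{G}) \leq \rank \hat{G} = r,
\]
which is the desired conclusion. The only real obstacle is the first step, namely justifying $G = P\hat{G}$; this rests on interpreting ``generator matrix obtained from $\hat{G}$'' as selecting a basis of the row space of $\hat{G}$, which is exactly the standard meaning for zero-divisor codes in the sense of \cite{hur1}. Everything else is bookkeeping via Lemma \ref{rg} and a one-line rank inequality.
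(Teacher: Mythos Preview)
Your proof is correct and follows essentially the same route as the paper: express $x_1G$ as $\underline{\gamma}\hat{G}$ for some vector $\underline{\gamma}$, invoke the observation after Lemma~\ref{rg} to get $X=\Gamma\hat{G}$, and conclude via $\rank X\le\rank\hat{G}=r$. The only cosmetic difference is that the paper picks an explicit set of $r$ linearly independent rows of $\hat{G}$ to build $\underline{\gamma}$, whereas you package this as $G=P\hat{G}$ for some $r\times n$ matrix $P$; the substance is identical.
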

\begin{proof}
Let the rows of $G$ be $\{v_1,v_2,\ldots,v_r\}$ and the rows of $\hat{G}$
be $\{w_1,w_2,\ldots, w_n\}$. Now $\hat{G}$ has $\rank r$ and let $\{w_j
|y\in J\}$ for $J \subset \{1,2,\ldots, n\}$ be a set of $r$ linearly
independent rows of $\hat{G}$.

Now $x_1G=\di\sum_{i=1}^r\al_iv_i$ and $v_i =\di\sum_{j\in J}\be_{i,j}w_j$.
Hence $x_1G=\di\sum_{j\in J}\de_jw_j$. Define for $i=1,2,\ldots,n$,
$\ga_i= \de_i$ when $i\in J$ and $\ga_i=0$ when $i\not \in J$.
Then $x_1G = (\ga_1,\ga_2,\ldots,\ga_n)\begin{ssmatrix}
w_1\\ w_2 \\ \vdots \\ w_n\end{ssmatrix}=(\ga_1,\ga_2,\ldots,\ga_n)\hat{G}$.
Hence $X=\Gamma \hat{G}$ where $X$ is the completion of $x_1G$ and
$\Gamma$ is the completion of $(\ga_1,\ga_2,\ldots,\ga_n)$. As
$\rank \hat{G}=r$ this implies $\rank X \leq r$.
\end{proof}


\begin{thebibliography}{99}

\bibitem{baum} U. Baum, ``Existence and efficient construction of fast
  Fourier transforms on supersolvable groups'', Comput.\ Complexity,
  1/3, 235-256, 1994.
\bibitem{blahut} Richard E.\ Blahut, 
{\em Algebraic Codes for data transmission}, Cambridge University
Press, 2003.

\bibitem{koblitz} Neal Koblitz, {\em  A Course in Number Theory and
  Cryptography}, Springer, 1994. 
\bibitem{koblitz1} Neal Koblitz, {\em Algebraic aspects of
  cryptography},  Springer, 2004.
\bibitem{hur}Ted Hurley, ``Group rings and rings of matrices'',
Inter. J. Pure \& Appl. Math., 31, no.3, 2006, 319-335.
\bibitem{hur3} Barry Hurley and Ted Hurley, ``Paraunitary matrices and group rings'', Intl. J. Group Theory, Vol. 3, no. 1, 31-56, 2014. See also 
  arXiv:1205.0703. 
\bibitem{hur4} Barry Hurley and Ted Hurley, ``Systems of MDS codes
  from units and idempotents'', Discrete Math., Vol 335, 81-91, 2014. 
\bibitem{hur5} Barry Hurley and Ted Hurley, ``Group ring cryptography'', Intl. J. Pure and Applied Math., Vol 69, no. 1, 67-86, 2011. See also: arXiv:1301.5596.
  \bibitem{hur6} Ted Hurley, ``Cryptographic schemes, key exchange, public key'', Intl. J. of Pure and Applied Mathematics, Vol. 93, no. 6, 897-927, 2014. 
\bibitem{hur8} Ted Hurley, Donny Hurley, Barry Hurley, ``Maximum distance separable codes to order'', arXiv: 1092.06624. 
\bibitem{hur1} Paul Hurley and Ted Hurley, ``Codes from zero-divisors
   and units   in group rings'', Int. J. Inform. and Coding Theory, 
1, 57-87, 2009.

\bibitem{hand} Alfred Menezes, Paul Van Oorschot and Scott
  Vanstone,
  {\em A Handbook of Applied Cryptography}, 
 CRC Press, 2001.

\bibitem{sehgal} C\'esar Milies \& Sudarshan Sehgal, {\em An introduction to
  Group Rings}, Klumer, 2002.

\end{thebibliography}
\end{document}